\newtheorem{theorem}{Theorem}
\newtheorem{lemma}{Lemma}
\newtheorem{proposition}{Proposition}
\newtheorem{corollary}{Corollary}
\newtheorem{example}{Example}
\newcommand{\Hc}{\mathcal{H}}
\newcommand{\Wc}{\mathcal{W}}
\newcommand{\Xc}{\mathcal{X}}
\newcommand{\Yc}{\mathcal{Y}}
\newcommand{\Zc}{\mathcal{Z}}
\newcommand{\Eb}{\mathbb{E}}
\newcommand{\Rb}{\mathbb{R}}
\newcommand{\gen}{\mathrm{gen}}
\begin{document}
\title{Individually Conditional Individual Mutual Information Bound on Generalization Error} 

\author{
  \IEEEauthorblockN{Ruida Zhou, Chao Tian, and Tie Liu}
  \IEEEauthorblockA{Department of Electrical and Computer Engineering\\
  		   Texas A\&M University\\
                    Email: \{ruida, chao.tian, tieliu\}@tamu.edu}
}

\maketitle

\begin{abstract}
We propose a new information-theoretic bound on generalization error based on a combination of the error decomposition technique of Bu et al. and the conditional mutual information (CMI) construction of Steinke and Zakynthinou. In a previous work, Haghifam et al. proposed a different bound combining the two aforementioned techniques, which we refer to as the conditional individual mutual information (CIMI) bound. However, in a simple Gaussian setting, both the CMI and the CIMI bounds are order-wise worse than that by Bu et al.. This observation motivated us to propose the new bound, which overcomes this issue by reducing the conditioning terms in the conditional mutual information. In the process of establishing this bound, a conditional decoupling lemma is established, which also leads to a meaningful dichotomy and comparison among these information-theoretic bounds. 
\end{abstract}
\url{} 

\section{Introduction}
Bounding the generalization error of learning algorithms is of fundamental importance in statistical machine learning. The conventional approach is to bound it using a quantity related to the hypothesis class, such as the VC-dimension \cite{shalev2014understanding}, and such bounds are therefore oblivious to the learning algorithm and data distribution. The obtained results are usually rather conservative, and cannot fully explain the recent success of deep learning. Recently, information theoretic approaches that jointly take into consideration the hypothesis class, the learning algorithm, and the data distribution, has drawn considerable attention \cite{russo2016controlling, asadi2018chaining, issa2019strengthened, negrea2019information, jose2020information, wu2020information,xu2017information,bu2020tightening,steinke2020reasoning,haghifam2020sharpened}.

The effort of deriving generalization error bounds using information theoretic approaches was perhaps first initiated in \cite{russo2016controlling} and \cite{xu2017information}. The bound was further tightened in \cite{bu2020tightening}, by decomposing the error, and bounding each term individually. Steinke and Zakynthinou \cite{steinke2020reasoning} proposed a conditional mutual information (CMI) based bound, by introducing a dependence structure which resembles that in the analysis of the Rademacher complexity \cite{shalev2014understanding}. Combining the idea of error decomposition \cite{bu2020tightening} and the CMI bound in \cite{steinke2020reasoning}, Haghifam et al. \cite{haghifam2020sharpened} subsequently provided a sharpened bound based on conditional individual mutual information (CIMI).

In this work, we propose a new generalization error bound, which is also based on a combination of the error decomposition technique and the CMI construction. This new bound is motivated by the observation that in a simple Gaussian setting, the CIMI bound in \cite{haghifam2020sharpened} (as well as the CMI bound in \cite{steinke2020reasoning}) is of constant order, while the bound in \cite{bu2020tightening} is of order $\Theta(\frac{1}{\sqrt{n}})$, where $n$ is the number of training samples. We further observe that the conditioning term in CIMI is the same as CMI, and it tends to reveal too much information which makes the bounds loose. The proposed new bound is thus obtained by making the mutual information conditioned on an individual sample (pair), which we refer to as the individually conditional individual mutual information (ICIMI) bound. In order to establish the new bound, we introduce a new conditional decoupling lemma. This lemma allows us to view the bounds in \cite{xu2017information,bu2020tightening,steinke2020reasoning,haghifam2020sharpened} and the new bound in a unified manner, which not only yields a dichotomy of these bounds, but also makes possible a meaningful comparison among them. Finally, we show that in the Gaussian setting mentioned earlier, the proposed new bound is also able to provide a bound of the same order as, but with an improved leading constant than, that in \cite{bu2020tightening}. 

After our initial preprint was posted on Arxiv, we were made aware of an independent work by Rodr\'{i}guez-G\'{a}lvez et al. \cite{rodriguez2020random}, where a similar ICIMI-based generalization bound was proposed under the  restricted assumption of bounded loss. In contrast, our result applies under more general conditions. Our work was mainly motivated by the looseness of the CIMI bound in the Gaussian setting, for which the restricted assumption in \cite{rodriguez2020random} makes their result not applicable. Furthermore, the proposed conditional decoupling lemma, which we believe is of fundamental importance, was not present in \cite{rodriguez2020random}.

\section{Preliminary}
We study the classic supervised learning setting. Denote the data domain as $\Zc := \Xc \times \Yc$, where $\Xc$ is the feature domain and $\Yc$ is the label set. The parametric hypothesis class is denoted as $\Hc_\Wc = \{ h_{W} : W \in \Wc \} \subseteq \Yc^{\Xc}$, where $\Wc$ is the parameter space. During the training, the learning algorithm (learner) has access to a sequence of training samples  $Z_{[n]} = (Z_1, Z_2, \ldots, Z_n)$, where each $Z_i$ is drawn independently from $\Zc$ following some unknown probability distribution $\xi$. The learner can be represented by $P_{W | Z_{[n]}}$, which is a kernel (channel) that (randomly) maps $\Zc^n$ to $\Wc$. 

To complete the classification or regression task, the learner in principle would choose a hypothesis $w \in \Wc$ to minimize the following population loss, under a given loss function $\ell:  \Wc  \times \Zc \rightarrow \Rb$, 
\begin{align}
L_{\xi}(w) = \int_{\Zc} \ell(w, z) \xi(d z). \label{eqn:def-pop-loss}
\end{align}
However, since only a training data vector $Z_{[n]}$ is available, the empirical loss of $w$ is usually computed (and minimized during training), which is given as 
\begin{align}
L_{Z_{[n]}}(w) = \frac{1}{n} \sum_{i = 1}^n \ell(w, Z_i). \label{eqn:def-emp-loss}
\end{align}
The expected generalization error of the learner $P_{W|Z_{[n]}}$ is
\begin{align}
\gen(\xi, P_{W|Z_{[n]}}) := \Eb \left[L_{\xi}(W) - L_{Z_{[n]}}(W) \right], \label{eqn:def-gen}
\end{align}
where the expectation is taken over the joint distribution $P(W,Z_{[n]})=\xi^n \otimes P_{W|Z_{[n]}}$. This quantity captures the effect of the learner's expected overfitting error due to limited training data, which we shall study in this work.

\section{Review of Related Results}
In this section, we briefly review a few information theoretic bounds on the generalization error relevant to this work. A more thorough discussion of their relation is deferred to Section \ref{sec:dichotomy} and \ref{sec:comparison}, after a unified framework is given. 

\subsection{Mutual information based bounds}

Xu and Raginsky, motivated by a previous work by Russo and Zou \cite{russo2016controlling}, provided a mutual information (MI) based bound on the expected generalization error \cite{xu2017information}. 
\begin{theorem}[MI Bound \cite{xu2017information}] \label{thm:MI} 
Suppose $\ell(w, Z)$ is $\sigma^2$-sub-Gaussian under $\xi$ for all $w \in \Wc$, then
\begin{align}
\gen( \xi, P_{W|Z_{[n]}} ) \leq \sqrt{\frac{2 \sigma^2}{n} I\left(W; Z_{[n]} \right)}. \label{eqn:MI}
\end{align}
\end{theorem}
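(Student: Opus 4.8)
The plan is to run the standard ``decoupling'' argument. First I would introduce an auxiliary random variable $\bar{W}$ with the same marginal law $P_W$ as the output of the learner, but drawn independently of the training data $Z_{[n]}$, so that $(\bar{W}, Z_{[n]})$ has law $P_W \otimes \xi^n$. Writing $g(w, Z_{[n]}) := L_\xi(w) - L_{Z_{[n]}}(w)$, the point of decoupling is that for \emph{every} fixed $w$ we have $\Eb_{Z_{[n]} \sim \xi^n}[g(w, Z_{[n]})] = L_\xi(w) - L_\xi(w) = 0$, since the $Z_i$ are i.i.d.\ from $\xi$; hence $\Eb_{P_W \otimes \xi^n}[g(\bar W, Z_{[n]})] = 0$, whereas $\Eb_{\xi^n \otimes P_{W|Z_{[n]}}}[g(W, Z_{[n]})] = \gen(\xi, P_{W|Z_{[n]}})$ by \eqref{eqn:def-gen}. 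So the task reduces to bounding how much the expectation of $g$ changes when the coupling between $W$ and $Z_{[n]}$ is turned on.

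Next I would record the fluctuation size of $g$ under the product law. For fixed $w$, $g(w, Z_{[n]}) = \frac{1}{n}\sum_{i=1}^n \big(L_\xi(w) - \ell(w, Z_i)\big)$ is an average of $n$ independent, zero-mean, $\sigma^2$-sub-Gaussian summands, so $\Eb_{Z_{[n]}}[e^{\lambda g(w, Z_{[n]})}] \le e^{\lambda^2 \sigma^2 / (2n)}$ for all $\lambda \in \Rb$. Since this holds for each $w$ with the same constant, integrating over $\bar W \sim P_W$ preserves it, and together with $\Eb_{P_W \otimes \xi^n}[g] = 0$ this shows that $g$ is centered and $\tfrac{\sigma^2}{n}$-sub-Gaussian under $P_W \otimes \xi^n$.

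The key step is the change-of-measure inequality coming from the Donsker--Varadhan variational representation of relative entropy. For any $\lambda \in \Rb$,
\begin{align}
\lambda\, \Eb_{\xi^n \otimes P_{W|Z_{[n]}}}[g] \;\le\; D\big(\xi^n \otimes P_{W|Z_{[n]}} \,\big\|\, P_W \otimes \xi^n\big) + \log \Eb_{P_W \otimes \xi^n}\big[e^{\lambda g}\big] \;\le\; I\big(W; Z_{[n]}\big) + \frac{\lambda^2 \sigma^2}{2n},
\end{align}
where the first inequality is Donsker--Varadhan, the relative entropy on the left equals $I(W; Z_{[n]})$, and the last bound uses the sub-Gaussian estimate above. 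Dividing by $\lambda > 0$ gives $\gen(\xi, P_{W|Z_{[n]}}) \le I(W;Z_{[n]})/\lambda + \lambda \sigma^2/(2n)$, and optimizing at $\lambda = \sqrt{2n I(W;Z_{[n]})/\sigma^2}$ yields \eqref{eqn:MI}. (Taking $\lambda < 0$ instead produces the matching lower bound $-\sqrt{2\sigma^2 I(W;Z_{[n]})/n}$, so the estimate is in fact two-sided.)

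I expect the only place that needs care is the sub-Gaussianity bookkeeping of the second step: one must verify that the per-sample $\sigma^2$-sub-Gaussianity, assumed uniformly over $w$, transfers to the $1/n$-averaged sum with variance proxy $\sigma^2/n$, and that the outer average over $\bar W$ does not inflate this proxy --- which works precisely because the conditional mean $\Eb_{Z_{[n]}}[g(w,Z_{[n]})]$ vanishes for every $w$. Everything else is the textbook Donsker--Varadhan / Hoeffding-type manipulation.
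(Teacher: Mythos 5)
Your proposal is correct and follows essentially the same route the paper uses: this theorem is recovered in the paper via the (unconditional) decoupling argument of the CD lemma with $X=W$, $Y=Z_{[n]}$ and no conditioning variable, i.e., Donsker--Varadhan change of measure between $\xi^n\otimes P_{W|Z_{[n]}}$ and $P_W\otimes\xi^n$ followed by optimization over $\lambda$. Your explicit choice $\lambda=\sqrt{2nI(W;Z_{[n]})/\sigma^2}$ is just the evaluation of the inverse Fenchel conjugate $\psi^{*-1}(\eta)=\inf_{\lambda>0}(\eta+\lambda^2\sigma^2/(2n))/\lambda=\sqrt{2\sigma^2\eta/n}$ in the sub-Gaussian special case, so the two arguments coincide.
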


The generalization can be written in two ways
\begin{align}
\gen( \xi, P_{W | Z_{[n]}} ) &= \Eb\left[L_{\tilde{Z}_{[n]}}(\tilde{W}) \right] - \Eb\left[L_{Z_{[n]}}(W) \right]\label{eqn:gen-decouple}\\
&= \frac{1}{n} \sum_{i = 1}^n \Eb\left[ (\ell(\tilde{W}, \tilde{Z}_i) - \ell(W, Z_i)) \right], \label{eqn:gen-ind}
\end{align}
where $\tilde{W}$ and $\tilde{Z}_i$ are independent random variables that have the same marginal distributions as $W$ and $Z_i$, respectively. Instead of bounding the difference (\ref{eqn:gen-decouple}) as in \cite{xu2017information}, Bu et al. \cite{bu2020tightening} bounded each individual difference in (\ref{eqn:gen-ind}) and derived an individual mutual information (IMI) based bound. Furthermore, the following inverse Fenchel conjugate function was utilized to obtain a tightened bound. For any random variables $F$, its cumulant generating function is 
\begin{align}
\psi_F(\lambda) := \ln \Eb\left[ e^{ \lambda (F - \Eb[F ])} \right], 
\end{align}
and the inverse of its Fenchel conjugate is given as
\begin{align}
\psi^{*-1}_{F}(\eta) := \inf_{ \lambda > 0} \frac{\eta + \psi_{F}(\lambda) }{\lambda}, \quad \eta \in [0, \infty). \label{eqn:def-origin-conjate-inverse}
\end{align}
The tightened bound is summarized in the following theorem. 
\begin{theorem}[IMI Bound \cite{bu2020tightening}] \label{thm:IMI} 
Suppose $\psi_{-}$ is an upper bound of $\psi_{-\ell(\tilde{W}, \tilde{Z}_i)}$, then
\begin{align}
\gen( \xi, P_{W|Z_{[n]}} ) \leq \frac{1}{n} \sum_{i = 1}^n \psi^{*-1}_{-}\left(I\left(W; Z_i \right) \right),
\end{align}
where $\tilde{W}$ and $\tilde{Z}_i$ are independent random variables that have the same marginal distributions as $W$ and $Z_i$, respectively.
\end{theorem}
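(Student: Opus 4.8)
The plan is to start from the per-sample decomposition of the generalization error already recorded in (\ref{eqn:gen-ind}), namely $\gen(\xi, P_{W|Z_{[n]}}) = \frac{1}{n}\sum_{i=1}^n \big(\Eb[\ell(\tilde{W},\tilde{Z}_i)] - \Eb[\ell(W,Z_i)]\big)$, so that it suffices to establish the single-letter inequality $\Eb[\ell(\tilde{W},\tilde{Z}_i)] - \Eb[\ell(W,Z_i)] \le \psi^{*-1}_{-}(I(W;Z_i))$ for each fixed $i$; summing over $i$ and dividing by $n$ then yields the theorem. Since $(\tilde{W},\tilde{Z}_i)$ has law $P_W\otimes\xi$ for every $i$, the single function $\psi_{-}$ in the hypothesis legitimately upper bounds $\psi_{-\ell(\tilde{W},\tilde{Z}_i)}$ simultaneously for all $i$, so no joint optimization across $i$ is needed.

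For the single-letter bound I would run a change-of-measure (Donsker--Varadhan) argument between the joint law $P_{W,Z_i}$ and the product of marginals $P_W\otimes P_{Z_i}$. Fix $\lambda>0$; the Donsker--Varadhan variational inequality applied with the test function $-\lambda\,\ell$ gives $-\lambda\,\Eb[\ell(W,Z_i)] \le D(P_{W,Z_i}\,\|\,P_W\otimes P_{Z_i}) + \ln\Eb[e^{-\lambda\,\ell(\tilde{W},\tilde{Z}_i)}]$, where on the right the loss is evaluated at the decoupled pair because the reference measure is the product. The relative-entropy term is exactly $I(W;Z_i)$, and rewriting the log-moment-generating term as $-\lambda\,\Eb[\ell(\tilde{W},\tilde{Z}_i)] + \psi_{-\ell(\tilde{W},\tilde{Z}_i)}(\lambda)$ via the definition of the cumulant generating function and rearranging gives $\lambda\big(\Eb[\ell(\tilde{W},\tilde{Z}_i)] - \Eb[\ell(W,Z_i)]\big) \le I(W;Z_i) + \psi_{-\ell(\tilde{W},\tilde{Z}_i)}(\lambda)$.

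To finish, I would invoke the hypothesis $\psi_{-\ell(\tilde{W},\tilde{Z}_i)}(\lambda)\le\psi_{-}(\lambda)$, divide by $\lambda>0$, and take the infimum over $\lambda>0$, which by (\ref{eqn:def-origin-conjate-inverse}) is precisely $\psi^{*-1}_{-}(I(W;Z_i))$, using that $I(W;Z_i)\ge0$ lies in the domain $[0,\infty)$; summing and normalizing completes the proof. The step requiring the most care is the change of measure: Donsker--Varadhan needs $P_{W,Z_i}\ll P_W\otimes P_{Z_i}$, which holds whenever $I(W;Z_i)<\infty$ (and the bound is trivial otherwise), and the manipulation of $\ln\Eb[e^{-\lambda\,\ell(\tilde{W},\tilde{Z}_i)}]$ is justified exactly on the set of $\lambda$ where $\psi_{-}(\lambda)$ is finite --- which is the only range that matters in the infimum anyway. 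A minor point is that this argument controls only the one-sided quantity appearing in (\ref{eqn:gen-ind}), not its absolute value, but that is all that is needed here.
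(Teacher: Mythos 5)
Your proof is correct and follows essentially the same route as the paper's unified framework: the per-sample decomposition (\ref{eqn:gen-ind}) combined with the Donsker--Varadhan decoupling argument and the inverse Fenchel conjugate, which is exactly the unconditioned special case of the CD lemma (Lemma \ref{lem:CD}) that Table \ref{tbl:gen-loss} uses to recover the IMI bound of \cite{bu2020tightening}. The handling of the hypothesis $\psi_{-\ell(\tilde{W},\tilde{Z}_i)} \leq \psi_{-}$, the absolute-continuity caveat, and the one-sidedness of the bound are all appropriate.
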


\subsection{Conditional mutual information based bounds}\label{sec:CMI}
Steinke and Zakynthinou \cite{steinke2020reasoning} recently introduced a novel bounding approach. In their approach, $Z_{[n]}^\pm := (Z^{\pm1}_{1}, Z^{\pm1}_{2}, \ldots, Z^{\pm1}_{n})$ is a $2 \times n$ table of samples that each $Z_{i}^{s}$, for $s = -1,1$ and $i = 1,\ldots, n$ is independently drawn following $\xi$. The training vector $(Z^{R_1}_{1}, Z^{R_2}_{2}, \ldots, Z^{R_n}_{n})$ is selected from the table $Z^{\pm}_{[n]}$, where $R_i$'s are independent Rademacher random variables, i.e., $R_i$ takes $1$ or $-1$ equally likely. The vector $R_{[n]} = (R_1, \ldots, R_n) \in \{-1, 1\}^n$ essentially selects one sample from each column in the table, which partition $Z_{[n]}^{\pm}$ into a training vector and a testing vector. For simplicity, we shall write $Z_i^{-1}$ and $Z_{i}^{+1}$ as $Z_i^{-}$ and $Z_{i}^{+}$, when the meaning is clear from the context. 

With the structure given above, the expected generalization error of the algorithm can be written as 
\begin{align}
&\text{gen}(\xi, P_{W|Z_{[n]}}) = \notag \\
& \Eb_{Z^{\pm}_{[n]}}\left[ \Eb \left[ \frac{1}{n} \sum_{i = 1}^n R_i\left( \ell(W, Z_{i}^{-}) - \ell(W, Z_{i}^{+}) \right) {\Big |} Z_{[n]}^{\pm} \right] \right]. \label{eqn:gen-cond}
\end{align}
Steinke and Zakynthinou obtained the following conditional mutual information (CMI) based result. 
\begin{theorem}[CMI Bound \cite{steinke2020reasoning}] \label{thm:CMI}
Suppose $\sup_{w \in \Wc} |\ell(w, z_1) - \ell(w, z_2)| \leq \Delta(z_1, z_2)$ for any $z_1, z_2 \in \Zc$, then
\begin{align}
\gen( \xi, P_{W|Z_{[n]}} ) \leq \sqrt{\frac{2}{n}\Eb[\Delta(Z_1, Z_2)^2] I\left(W; R_{[n]} | Z^{\pm}_{[n]}\right)}, \label{eqn:CMI-zz}
\end{align}
where $Z_1, Z_2$ are independent samples distributed as $\xi$.
\end{theorem}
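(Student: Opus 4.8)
The plan is to start from the conditional representation (\ref{eqn:gen-cond}) and apply a Donsker--Varadhan (Gibbs variational) decoupling argument, column by column. Fix a realization $Z^\pm_{[n]} = z^\pm_{[n]}$ and define
\[
g(w, r_{[n]}) := \frac{1}{n}\sum_{i=1}^n r_i\bigl(\ell(w, z_i^-) - \ell(w, z_i^+)\bigr),
\]
so that the inner conditional expectation in (\ref{eqn:gen-cond}) is $\Eb_{P_{W,R_{[n]}|z^\pm_{[n]}}}[g(W,R_{[n]})]$. Let $Q := P_{W|z^\pm_{[n]}} \otimes P_{R_{[n]}}$ be the decoupled product measure; since each $R_i$ is a zero-mean Rademacher variable independent of $Z^\pm_{[n]}$, we have $\Eb_Q[g] = 0$, so the inner expectation equals $\Eb_{P_{W,R_{[n]}|z^\pm_{[n]}}}[g] - \Eb_Q[g]$.

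First I would bound this difference using the variational representation of the KL divergence: for every $\lambda > 0$,
\[
\lambda\bigl(\Eb_{P_{W,R_{[n]}|z^\pm_{[n]}}}[g] - \Eb_Q[g]\bigr) \le D\bigl(P_{W,R_{[n]}|z^\pm_{[n]}} \,\big\|\, Q\bigr) + \ln \Eb_Q\bigl[e^{\lambda(g - \Eb_Q[g])}\bigr].
\]
The divergence term is exactly $I(W; R_{[n]} \mid Z^\pm_{[n]} = z^\pm_{[n]})$. For the cumulant term, condition on $W = w$: then $g(w, R_{[n]}) = \frac{1}{n}\sum_i R_i a_i$ with $a_i := \ell(w,z_i^-) - \ell(w,z_i^+)$ and $|a_i| \le \Delta(z_i^-, z_i^+)$ by hypothesis, so Hoeffding's lemma gives $\Eb\bigl[e^{\lambda g(w,R_{[n]})} \mid W = w\bigr] \le \exp\!\bigl(\tfrac{\lambda^2}{2n^2}\sum_i \Delta(z_i^-,z_i^+)^2\bigr)$. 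Crucially this bound is uniform in $w$, so it survives averaging over $W$ and $g$ is $\sigma_z^2$-sub-Gaussian under $Q$ with $\sigma_z^2 := \frac{1}{n^2}\sum_i \Delta(z_i^-,z_i^+)^2$. Optimizing over $\lambda$ then yields, for each $z^\pm_{[n]}$,
\[
\Eb_{P_{W,R_{[n]}|z^\pm_{[n]}}}[g] \le \sqrt{2\,\sigma_z^2 \, I\bigl(W; R_{[n]} \mid Z^\pm_{[n]} = z^\pm_{[n]}\bigr)}.
\]

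Finally I would take the expectation over $Z^\pm_{[n]}$ and apply the Cauchy--Schwarz inequality $\Eb[\sqrt{XY}] \le \sqrt{\Eb[X]\,\Eb[Y]}$ with $X = 2\,\sigma_{Z^\pm_{[n]}}^2$ and $Y = I(W; R_{[n]} \mid Z^\pm_{[n]})$. The resulting second factor is $I(W;R_{[n]}\mid Z^\pm_{[n]})$ by the definition of conditional mutual information, and the first factor simplifies via $\Eb[\sigma_{Z^\pm_{[n]}}^2] = \frac{1}{n^2}\sum_i \Eb[\Delta(Z_i^-,Z_i^+)^2] = \frac{1}{n}\Eb[\Delta(Z_1,Z_2)^2]$, since the entries $Z_i^s$ are i.i.d.\ $\xi$; combining these gives (\ref{eqn:CMI-zz}). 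The main obstacle I anticipate is the sub-Gaussianity step: one must verify that the Rademacher sum is sub-Gaussian with a variance proxy that does not depend on the hypothesis $w$ (so it can be pulled out of the expectation over $W$ before invoking Donsker--Varadhan), and recognize that the assumption $\sup_{w} |\ell(w,z_1)-\ell(w,z_2)| \le \Delta(z_1,z_2)$ is precisely what licenses this; the remaining care is bookkeeping in the final Cauchy--Schwarz step to land on $\Eb[\Delta(Z_1,Z_2)^2]$ with the correct $1/n$ factor.
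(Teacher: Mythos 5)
Your proof is correct and follows essentially the route the paper takes: Theorem~\ref{thm:CMI} is recovered there as the instance of the CD lemma with $X=W$, $Y=R_{[n]}$, $U=Z^{\pm}_{[n]}$, i.e., Donsker--Varadhan decoupling conditioned on the sample table, a Hoeffding/sub-Gaussian bound on the Rademacher sum that is uniform in $w$ precisely because of the $\Delta(z_1,z_2)$ assumption, and optimization over $\lambda$. The only cosmetic difference is the final averaging step: you pass from the sample-conditioned bound to the stated one via Cauchy--Schwarz, whereas the paper's CD lemma exchanges the infimum over $\lambda$ with the expectation over $Z^{\pm}_{[n]}$; both land on exactly $\sqrt{\tfrac{2}{n}\Eb[\Delta(Z_1,Z_2)^2]\, I\left(W;R_{[n]}\mid Z^{\pm}_{[n]}\right)}$.
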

Since $R_i$ is binary, the conditional mutual information is always bounded; in contrast, mutual information based bounds (i.e., MI and IMI bounds) can be unbounded, particularly when the random variables $W, Z_{i}$ are both continuous. 

Motivated by the results in \cite{bu2020tightening}, Haghifam et al. \cite{haghifam2020sharpened} proposed a sharpened bound by similarly bounding each term in (\ref{eqn:gen-cond}). Moreover, they provided a conditional individual mutual information (CIMI) based bound represented by \textit{the sample-conditioned mutual information}, which is defined as 
\begin{align}
I_{u}(X; Y) := I(X; Y|U=u).
\end{align}
Clearly $I_U(X; Y)$ is a function of the random variable $U$, thus also a random variable, and $\Eb[I_U(X; Y)] = I(X; Y | U)$. These sharpened bounds are summarized in the following theorem. 
\begin{theorem}[CIMI Bound \cite{haghifam2020sharpened}] \label{thm:CIMI}
Suppose $\ell \in [0, 1]$, then
\begin{align}
\gen( \xi, P_{W|Z_{[n]}} ) &\leq \frac{1}{n} \sum_{i = 1}^n \Eb \left[ \sqrt{2 I_{Z^{\pm}_{[n]}}\left(W; R_i \right)} \right] \label{eqn:CIMI-hat} \\
&\leq \frac{1}{n}\sum_{i = 1}^n \sqrt{2I\left(W; R_i | Z^{\pm}_{[n]} \right)} \label{eqn:CIMI}.
\end{align}
\end{theorem}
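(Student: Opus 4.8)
The plan is to bound each of the $n$ summands in the conditional error decomposition (\ref{eqn:gen-cond}) separately, by applying a conditional version of the Donsker--Varadhan change-of-measure inequality to the pair $(W, R_i)$ given the whole table $Z^{\pm}_{[n]}$. The two inequalities in the statement will then follow from, respectively, the triangle inequality and Jensen's inequality.

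First I would fix a realization $Z^{\pm}_{[n]} = z^{\pm}_{[n]}$ and introduce the function $h_i(w, r) := r\,\bigl(\ell(w, z_i^{-}) - \ell(w, z_i^{+})\bigr)$. Since $\ell$ takes values in $[0,1]$, we have $h_i \in [-1,1]$ deterministically, so by Hoeffding's lemma $h_i$ is $1$-sub-Gaussian under \emph{any} distribution on $(w,r)$ --- in particular under the decoupled product measure $Q := P_{W \mid Z^{\pm}_{[n]} = z^{\pm}_{[n]}} \otimes P_{R_i}$. The crucial point is that $R_i$ is a Rademacher variable independent of the table, so under $Q$ one has $\Eb_Q[h_i] = \Eb[R_i]\cdot\Eb[\ell(W, z_i^{-}) - \ell(W, z_i^{+}) \mid z^{\pm}_{[n]}] = 0$, whereas under the true joint conditional law $P := P_{W, R_i \mid Z^{\pm}_{[n]} = z^{\pm}_{[n]}}$ the expectation of $h_i$ is exactly the $i$-th inner term of (\ref{eqn:gen-cond}) evaluated at $z^{\pm}_{[n]}$.

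Next I would invoke the change-of-measure bound: for a $\sigma^2$-sub-Gaussian $h$ under a product measure $Q$, $\bigl|\Eb_P[h] - \Eb_Q[h]\bigr| \le \sqrt{2\sigma^2\, D_{\mathrm{KL}}(P\|Q)}$. With $\sigma^2 = 1$ and $D_{\mathrm{KL}}(P\|Q) = I(W; R_i \mid Z^{\pm}_{[n]} = z^{\pm}_{[n]}) = I_{z^{\pm}_{[n]}}(W; R_i)$ (here using that $P_{R_i \mid Z^{\pm}_{[n]}} = P_{R_i}$), this gives
\[
\left| \Eb\!\left[ R_i\bigl(\ell(W, Z_i^{-}) - \ell(W, Z_i^{+})\bigr) \mid Z^{\pm}_{[n]} = z^{\pm}_{[n]} \right] \right| \le \sqrt{2\, I_{z^{\pm}_{[n]}}(W; R_i)}.
\]
I would then take the expectation over $Z^{\pm}_{[n]}$, pull the absolute value inside via the triangle inequality, sum over $i$ and divide by $n$ to obtain (\ref{eqn:CIMI-hat}); the looser form (\ref{eqn:CIMI}) follows by applying Jensen's inequality to the concave map $x \mapsto \sqrt{2x}$ together with $\Eb[I_{Z^{\pm}_{[n]}}(W; R_i)] = I(W; R_i \mid Z^{\pm}_{[n]})$.

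I expect the main point requiring care --- rather than a genuine obstacle --- to be the bookkeeping of the conditional decoupling: one must use that sub-Gaussianity is only needed under the \emph{product} reference measure (immediate here since $h_i$ is bounded), and that in that reference measure $R_i$ remains marginally Rademacher and independent of everything given $Z^{\pm}_{[n]}$, which is precisely what forces the decoupled expectation to vanish and turns the change-of-measure penalty into the sample-conditioned mutual information $I_{Z^{\pm}_{[n]}}(W; R_i)$. Everything else is a routine assembly of Hoeffding's lemma, the Donsker--Varadhan inequality, and Jensen's inequality.
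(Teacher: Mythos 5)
Your proposal is correct and follows essentially the same route as the paper: it is the paper's conditional decoupling argument (Donsker--Varadhan change of measure against the product measure $P_{W\mid Z^{\pm}_{[n]}}\otimes P_{R_i}$, with $P_{R_i\mid Z^{\pm}_{[n]}}=P_{R_i}$ turning the KL penalty into $I_{Z^{\pm}_{[n]}}(W;R_i)$), specialized via Hoeffding's lemma to the bounded-loss case so that the inverse Fenchel conjugate becomes $\sqrt{2\eta}$, followed by Jensen's inequality for the weaker form --- exactly how the paper's CD lemma and its Table entry recover the CIMI bound of Haghifam et al.
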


\section{New Result}\label{sec:new}
\subsection{A motivating example} \label{sec:motivating-example}
Let us consider the simple setting of estimating the mean from samples generated from a Gaussian distribution $N(\mu, \sigma^2)$, by averaging the $i.i.d.$ training samples under the squared loss. 
\begin{example}[Estimating the Gaussian mean]\label{eg:Gaussian-mean}
The training samples $Z_{[n]}$ are drawn $i.i.d.$ following $N(\mu, \sigma^2)$ for some unknown $\mu$. The learner deterministically estimates $\mu$ by averaging the training samples, i.e., $W = \frac{1}{n}\sum_{i = 1}^n Z_i$, whose empirical error is
\begin{align}
L_{Z_{[n]}}(W) = \frac{1}{n}\sum_{i = 1}^n (W - Z_{i})^2.
\end{align}
\end{example}

Bu et al. \cite{bu2020tightening} showed that the mutual information term in the IMI bound is
\begin{align}
I(W;Z_i)=\frac{1}{2}\log\frac{n}{n-1}=\frac{1}{2(n-1)}+o\left( \frac{1}{n} \right),\label{eqn:IMI-Gaussian}
\end{align}
and obtained the following IMI based bound
\begin{align}
\sigma^2 \sqrt{\frac{2(n+1)^2}{n^2} \log\frac{n}{n-1}} = \sigma^2\sqrt{\frac{2}{n-1}} + o\left( \frac{1}{\sqrt{n}} \right). \label{eqn:bu-Gaussian}
\end{align}
For this simple setting, the generalization error can in fact be calculated exactly to be $\frac{2 \sigma^2}{n}$. Though the error bound above does not have the same order as the true generalization error, it is consistent with the VC dimension-based bound and is the best known for this case. Note that the MI bound will be unbounded, since $I(W; Z_{[n]})$ is unbounded.

Next consider the CMI and CIMI  bounds, and let us focus on the mutual information terms in these bounds, which give
\begin{align}
&I(W; R_{[n]}|Z^{\pm}_{[n]})=n/\log_2 e, \\
&I_{Z^\pm_{[n]}}(W; R_i)=1/\log_2 e, \quad a.s..
\end{align}
It is seen that they are order-wise worse than (\ref{eqn:IMI-Gaussian}), which suggests that the bounds obtained from the CMI and CIMI bounds would be order-wise worse than (\ref{eqn:bu-Gaussian}). 

Theorem \ref{thm:CMI} and Theorem \ref{thm:CIMI} in fact do not apply directly in this setting, since their required conditions do not hold. In Theorem \ref{thm:CMI}, the function $\Delta(z_1, z_2)$ does not exist (i.e., unbounded); even if it existed, the term $\Eb[\Delta(Z_1, Z_2)^2]$ would be a constant, thus the CMI bound would be of constant order. Similarly, if the condition $\ell\in[0,1]$ held, the CIIMI bound would also be of constant order. As we shall show shortly, the CMI and CIMI bounds can be generalized and strengthened, yet the resultant strengthened bounds in this setting still do not diminish as $n\rightarrow \infty$, and thus would be order-wise worse than the IMI bound. 

A question arises naturally: Is the looseness of the CMI and CIMI bounds here due to the introduction of the conditioning terms? As we shall show next, it is in fact caused by too much information being revealed in the conditioning terms, and there is indeed a natural way to resolve this issue.

\subsection{A conditional decoupling lemma}
Our main result relies on a key lemma. A few more definitions are first introduced in order to present this lemma and the main result. 

For any random variables $F$ and $U$, define the \textit{sample-conditioned cumulant generating function} for any realization $U = u$,
\begin{align}
\psi_{F|U}(\lambda, u) := \ln \Eb\left[ e^{ \lambda (F - \Eb[F | U = u])} {\Big |} U = u\right], \quad \lambda \in \Rb.
\end{align}
It is straightforward to verify that for any realization $U=u$, $\psi_{F|U}(0, u) = \psi'_{F|U}(0, u)= 0$ and $\psi_{F|U}''(0, u) > 0$. Hence the inverse of its Fenchel conjugate
\begin{align}
\psi^{*-1}_{F|U}(\eta, u) := \inf_{ \lambda > 0} \frac{\eta + \psi_{F|U}(\lambda, u) }{\lambda}, \quad \eta \in [0, \infty) \label{eqn:def-conjate-inverse}
\end{align}
is concave and non-decreasing; see e.g., \cite{bu2020tightening} and \cite{boucheron2013concentration}. The unconditioned version of this function was introduced earlier by Bu et al. \cite{bu2020tightening}. When it is clear from context, we will write 
\begin{align}
\Psi_{F|U}(\lambda) := \psi_{F|U}(\lambda, U), \quad \Psi^{*-1}_{F|U}(\eta) := \psi^{*-1}_{F|U}(\eta, U),
\end{align}
which are functions of $U$, thus random. Next define \textit{the conditional cumulant generating function} 
\begin{align}
\bar{\psi}_{F|U} = \Eb\left[\Psi_{F|U}\right],
\end{align}
 and similarly its inverse Fenchel conjugate as $\bar{\psi}_{F|U}^{*-1}$. 

For a pair of random variables $(X, Y)$, its \textit{decoupled pair conditioned on a third random variable} $U$ is a pair of random variables  $(\tilde{X}, \tilde{Y})$ , such that
\begin{align}
(\tilde{X}, U) \overset{D}{=} (X, U), \quad (\tilde{Y}, U) \overset{D}{=} (Y, U),
\end{align}
i.e., $(\tilde{X}, U)$ and $(X, U)$ are identically distributed, and $(\tilde{Y}, U)$ and $(Y, U)$ are identically distributed, 
and moreover 
\begin{align}
\tilde{X} \leftrightarrow U \leftrightarrow \tilde{Y}
\end{align}
forms a Markov string. It follows from this definition that 
\begin{align}
I_U(X ; Y)= D(P_{X,Y|U} || P_{\tilde{X},\tilde{Y}|U}). \quad\label{eqn:IU}
\end{align}

We next introduce a conditional decoupling (CD) lemma, which serves an instrumental role in our work. The unconditioned version was presented in \cite{bu2020tightening}.  
\begin{lemma}[The CD lemma] \label{lem:CD}
For any three random variables $X, Y, U$, let $\tilde{X}, \tilde{Y}$ be the decoupled pair of $X, Y$ conditioned on $U$.
Let $F := f(X, Y)$ and $\tilde{F}:= f(\tilde{X}, \tilde{Y})$, for some real-valued measurable function $f$. The following inequalities hold
\begin{align}
\Eb[F] - \Eb[ \tilde{F}] &\leq \Eb\left[ \Psi^{*-1}_{\tilde{F}|U}\left( I_{U}(X; Y) \right) \right] \notag \\
&\leq \bar{\psi}^{*-1}_{\tilde{F}|U} \left( I(X; Y|U) \right), \label{eqn:CD+}\\
\Eb[\tilde{F}] - \Eb[F] &\leq \Eb \left[ \Psi^{*-1}_{-\tilde{F}|U}\left( I_{U}(X; Y) \right) \right] \notag \\
&\leq \bar{\psi}^{*-1}_{-\tilde{F}|U} \left( I(X; Y|U) \right). \label{eqn:CD-}
\end{align}
\end{lemma}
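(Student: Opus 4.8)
The plan is to prove the first inequality in \eqref{eqn:CD+} by a change-of-measure (Donsker--Varadhan) argument carried out \emph{conditionally} on $U$, then to obtain the second inequality by interchanging an expectation over $U$ with an infimum over $\lambda$, and finally to read off \eqref{eqn:CD-} by applying the same statement to $-f$. The only property of the decoupled pair that is needed is that, for each realization $U=u$, the conditional law $P_{\tilde X,\tilde Y|U=u}$ equals the product $P_{X|U=u}\otimes P_{Y|U=u}$, so that $D(P_{X,Y|U=u}\|P_{\tilde X,\tilde Y|U=u})=I(X;Y|U=u)=I_u(X;Y)$; this is the pointwise form of \eqref{eqn:IU}.

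Fix a realization $U=u$ and let $c:=\Eb[\tilde F|U=u]$, the $P_{\tilde X,\tilde Y|U=u}$-mean of $f$. Applying the Donsker--Varadhan inequality with reference measure $Q=P_{\tilde X,\tilde Y|U=u}$, target measure $P=P_{X,Y|U=u}$, and test function $g=\lambda(f-c)$ for $\lambda>0$: the divergence term equals $I_u(X;Y)$, and the log-moment term is, by definition, $\psi_{\tilde F|U}(\lambda,u)$, whence
\begin{align*}
\lambda\bigl(\Eb[F|U=u]-\Eb[\tilde F|U=u]\bigr)\leq I_u(X;Y)+\psi_{\tilde F|U}(\lambda,u).
\end{align*}
Dividing by $\lambda$ and taking $\inf_{\lambda>0}$ gives $\Eb[F|U=u]-\Eb[\tilde F|U=u]\leq\psi^{*-1}_{\tilde F|U}(I_u(X;Y),u)$, and averaging over $U$ yields $\Eb[F]-\Eb[\tilde F]\leq\Eb[\Psi^{*-1}_{\tilde F|U}(I_U(X;Y))]$, which is the first bound in \eqref{eqn:CD+}.

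For the second bound I keep $\lambda$ fixed before averaging. From $\Eb[F|U=u]-\Eb[\tilde F|U=u]\leq(I_u(X;Y)+\psi_{\tilde F|U}(\lambda,u))/\lambda$, valid for every $\lambda>0$, taking $\Eb_U[\cdot]$ and using $\Eb[I_U(X;Y)]=I(X;Y|U)$ together with $\Eb[\Psi_{\tilde F|U}(\lambda)]=\bar\psi_{\tilde F|U}(\lambda)$ gives $\Eb[F]-\Eb[\tilde F]\leq(I(X;Y|U)+\bar\psi_{\tilde F|U}(\lambda))/\lambda$; optimizing over $\lambda>0$ yields $\bar\psi^{*-1}_{\tilde F|U}(I(X;Y|U))$. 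That this quantity exceeds the first bound is the elementary inequality $\Eb_U[\inf_\lambda(\cdot)]\leq\inf_\lambda\Eb_U[(\cdot)]$ --- optimizing $\lambda$ separately for each realization of $U$ can only help, so the chain in \eqref{eqn:CD+} is consistent. Inequality \eqref{eqn:CD-} follows immediately upon replacing $f$ by $-f$, hence $F$ by $-F$ and $\tilde F$ by $-\tilde F$, since $\Eb[\tilde F]-\Eb[F]=\Eb[-F]-\Eb[-\tilde F]$.

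The point requiring care, rather than a deep obstacle, is the handling of degenerate and integrability cases: if $\psi_{\tilde F|U}(\lambda,u)=+\infty$ for all $\lambda>0$ then $\psi^{*-1}_{\tilde F|U}(\cdot,u)=+\infty$ and the corresponding bound is vacuous, so the claimed inequalities hold trivially; otherwise one must check that $\Eb[\tilde F|U=u]$ is well defined and that Donsker--Varadhan is legitimately applied on the (possibly continuous) product space, which is precisely where a sub-Gaussian-type control on $\psi_{-\tilde F|U}$ of the kind used elsewhere in the paper enters. Measurability of $u\mapsto\psi^{*-1}_{\tilde F|U}(I_u(X;Y),u)$, needed in order to average over $U$, is routine, since the defining infimum may be restricted to a countable dense set of $\lambda$.
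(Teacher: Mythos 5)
Your proof is correct and follows essentially the same route as the paper: a conditional application of the Donsker--Varadhan representation with the decoupled law as reference measure, optimization over $\lambda>0$ for the sample-conditioned bound, and the exchange of expectation over $U$ with the infimum over $\lambda$ for the weaker bound in terms of $\bar{\psi}^{*-1}_{\tilde{F}|U}$ and $I(X;Y|U)$. The only (cosmetic) difference is that you obtain \eqref{eqn:CD-} by applying the result to $-f$, whereas the paper reruns the argument with negative $\lambda$; these are equivalent.
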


This lemma is proved by utilizing the Donsker–Varadhan variational representation of KL divergence and the concavity of the inverse Fenchel conjugate function. The proof details are deferred to Section \ref{sec:CD}.

\subsection{The ICIMI bound}\label{sec:results}
Let $(W,Z^{\pm}_{[n]},R_{[n]})$ be as given previously in Section \ref{sec:CMI}. For each $i= 1,\ldots, n$, let $(\tilde{W}_i, \tilde{R}_i)$ be a decoupled pair of $(W, R_i)$ conditioned on $Z_{i}^\pm$. The new bound we propose is presented in Theorem \ref{thm:ICIMI}. 
\begin{theorem}\label{thm:ICIMI}(ICIMI Bound)
Given an algorithm $P_{W|Z_{[n]}}$, the following bounds on the generalization hold
\begin{align}
\gen(\xi, P_{W|Z_{[n]}}) &\leq \frac{1}{n}\sum_{i = 1}^n \Eb\left[ \Psi^{*-1}_{\tilde{G}_i|Z^{\pm}_i}(I_{Z^\pm_i}(W; R_i)) \right]\\
&\leq \frac{1}{n}\sum_{i = 1}^n \bar{\psi}^{*-1}_{\tilde{G}_i |Z^{\pm}_i}(I(W; R_i| Z^\pm_i)),
\end{align}
where $\tilde{G}_i = \tilde{R}_i\left( \ell(\tilde{W}_i, Z_{i}^{-}) - \ell(\tilde{W}_i, Z_{i}^{+}) \right)$.
\end{theorem}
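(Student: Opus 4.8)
The plan is to combine the decomposition of the generalization error from equation (\ref{eqn:gen-cond}) with the conditional decoupling lemma (Lemma \ref{lem:CD}), applied one coordinate at a time. First I would start from the identity
\begin{align}
\gen(\xi, P_{W|Z_{[n]}}) = \frac{1}{n}\sum_{i=1}^n \Eb\left[ R_i\left( \ell(W, Z_i^-) - \ell(W, Z_i^+) \right) \right],
\end{align}
which follows by taking the outer expectation in (\ref{eqn:gen-cond}) and exchanging it with the sum. The key observation is that each summand is exactly $\Eb[G_i]$, where $G_i := R_i(\ell(W,Z_i^-) - \ell(W,Z_i^+)) = g_i(W, R_i)$ is a measurable function of the pair $(W, R_i)$ with $Z_i^\pm$ playing the role of a third conditioning variable that is \emph{independent} of $(W, R_i)$ only after conditioning—more precisely, the map $g_i$ depends on the realization of $Z_i^\pm$, but for each fixed value $z_i^\pm$ it is a fixed function of $(W, R_i)$. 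This is the right setup for Lemma \ref{lem:CD} with $X \leftarrow W$, $Y \leftarrow R_i$, $U \leftarrow Z_i^\pm$, and $f \leftarrow g_i(\cdot,\cdot; z_i^\pm)$.

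Next I would verify that the decoupled random variable $\tilde{G}_i = \tilde{R}_i(\ell(\tilde{W}_i, Z_i^-) - \ell(\tilde{W}_i, Z_i^+))$ has expectation zero, which is what lets the CD lemma bound $\Eb[G_i]$ rather than $\Eb[G_i] - \Eb[\tilde{G}_i]$. Indeed, since $(\tilde{W}_i, Z_i^\pm) \overset{D}{=} (W, Z_i^\pm)$ and $(\tilde{R}_i, Z_i^\pm) \overset{D}{=} (R_i, Z_i^\pm)$, and $\tilde{W}_i \leftrightarrow Z_i^\pm \leftrightarrow \tilde{R}_i$ is Markov, conditioned on $Z_i^\pm = z_i^\pm$ the variables $\tilde{W}_i$ and $\tilde{R}_i$ are independent with $\tilde{R}_i$ still Rademacher (since $R_i$ is independent of $Z_i^\pm$, so its conditional law is unchanged). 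Therefore $\Eb[\tilde{G}_i \mid Z_i^\pm = z_i^\pm] = \Eb[\tilde{R}_i] \cdot \Eb[\ell(\tilde{W}_i, Z_i^-) - \ell(\tilde{W}_i, Z_i^+) \mid Z_i^\pm = z_i^\pm] = 0$, and hence $\Eb[\tilde{G}_i] = 0$. Applying inequality (\ref{eqn:CD+}) of Lemma \ref{lem:CD} with $F \leftarrow G_i$ and $\tilde F \leftarrow \tilde G_i$ then gives
\begin{align}
\Eb[G_i] = \Eb[G_i] - \Eb[\tilde{G}_i] \leq \Eb\left[ \Psi^{*-1}_{\tilde{G}_i|Z^\pm_i}\left( I_{Z^\pm_i}(W; R_i) \right) \right] \leq \bar\psi^{*-1}_{\tilde{G}_i|Z^\pm_i}\left( I(W; R_i | Z^\pm_i) \right),
\end{align}
and summing over $i$ and dividing by $n$ yields the claimed chain of inequalities.

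One subtlety I would be careful about is that the conditioning variable in Lemma \ref{lem:CD} must be a single fixed random variable $U$, whereas $(W, R_i)$ sits inside a larger probability space involving all of $Z_{[n]}^\pm$ and $R_{[n]}$; I would note that the lemma applies verbatim after marginalizing down to the joint law of $(W, R_i, Z_i^\pm)$, since the CD lemma's hypotheses and conclusion only reference that joint distribution. A second point worth a sentence is the measurability and sub-Gaussianity bookkeeping implicit in writing $\Psi^{*-1}_{\tilde G_i|Z_i^\pm}$: the function is well-defined (finite, concave, nondecreasing) for each realization as discussed after (\ref{eqn:def-conjate-inverse}), and the outer expectation over $Z_i^\pm$ is what produces the first bound; Jensen's inequality applied to the concave $\psi^{*-1}$ (as already used inside the proof of Lemma \ref{lem:CD}) gives the passage to the second, fully-averaged bound. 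I do not anticipate a deep obstacle here—the real content is in Lemma \ref{lem:CD}—but the main thing to get right is the zero-mean verification for $\tilde G_i$ and the observation that reducing the conditioning from $Z_{[n]}^\pm$ to the single column $Z_i^\pm$ is exactly what the decoupled-pair construction licenses, which is the whole point of the ICIMI refinement over CIMI.
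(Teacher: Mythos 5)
Your proposal is correct and follows essentially the same route as the paper: decompose the generalization error coordinate-wise as in (\ref{eqn:gen-ind-cond}), apply the CD lemma with $X=W$, $Y=R_i$, $U=Z_i^{\pm}$ and $F_i=R_i(\ell(W,Z_i^-)-\ell(W,Z_i^+))$, and use $\Eb[\tilde G_i]=0$ to turn $\Eb[F_i]-\Eb[\tilde F_i]$ into $\Eb[F_i]$. The only difference is presentational: you spell out the zero-mean verification (via the conditional independence of $\tilde W_i$ and $\tilde R_i$ given $Z_i^{\pm}$ and the fact that $R_i$ is independent of $Z_i^{\pm}$) and the fact that $f$ may depend on the realization of $U$, both of which the paper leaves implicit.
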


There are two bounds in this theorem. The stronger bound is in terms of the sample-conditioned mutual information, which is different from the conventional notion of conditional mutual information and may be more difficult to evaluate. The weaker bound is in terms of the conventional mutual information. 

In the proposed bounds, the mutual information is conditioned on the individual data pair $Z^{\pm}_i$, instead of the full data pair set $Z^{\pm}_{[n]}$. Intuitively, revealing only $Z^{\pm}_i$ makes it more difficult,  than revealing all data pairs $Z^{\pm}_{[n]}$, to deduce information regarding $R_i$ from $W$. As a consequence, the mutual information $I(W; R_i| Z^\pm_i)$ is less than $I(W; R_i| Z^\pm_{[n]})$, yielding a potentially tighter bound.

\begin{proof}[Proof of Theorem \ref{thm:ICIMI}]
We can rewrite the generalization error given in (\ref{eqn:gen-cond}) as
\begin{align}
&\gen(\xi, P_{W|Z_{[n]}}) =   \notag\\
&\quad \frac{1}{n} \sum_{i = 1}^n \Eb \left[ \Eb \left[ R_i \left( \ell(W, Z^{-}_{i}) - \ell(W, Z^{+}_{i}) \right)| Z^{\pm}_{i} \right] \right]. \label{eqn:gen-ind-cond} 
\end{align}
Now apply the CD lemma on each individual term in (\ref{eqn:gen-ind-cond}) by letting $X=W$, $Y_i =R_i$, $U_i =Z^{\pm}_i$, and $F_i = R_i \left( \ell(W, Z^{-}_{i}) - \ell(W, Z^{+}_{i}) \right)$. Since 
\begin{align}
\Eb[\tilde{G}_i] = \Eb[\tilde{F}_i] = \Eb\left[ \tilde{R}_i\left( \ell(\tilde{W}_i, Z_{i}^{-}) - \ell(\tilde{W}_i, Z_{i}^{+}) \right) \right] = 0, \notag
\end{align}
we have 
\begin{align}
 \gen( \xi, P_{W|Z_{[n]}}) &= \frac{1}{n} \sum_{i = 1}^n \Eb[F_i] = \frac{1}{n} \sum_{i = 1}^n \Eb[F_i] - \Eb[\tilde{F}_i] \notag \\
&\leq \frac{1}{n}\sum_{i = 1}^n \Eb\left[ \Psi^{*-1}_{\tilde{G}_i|Z^{\pm}_i}(I_{Z^\pm_i}(W; R_i)) \right]\\
& \leq \frac{1}{n}\sum_{i = 1}^n \bar{\psi}^{*-1}_{\tilde{G}_i |Z^{\pm}_i}(I(W; R_i| Z^\pm_i)),
\end{align}
which completes the proof.
\end{proof}

We call this bound the individually conditional individual mutual information (ICIMI) bound, since it is derived by applying the CD lemma on the individual conditional terms in (\ref{eqn:gen-ind-cond}). 

We note that Theorem \ref{thm:ICIMI} implies Proposition 3 in \cite{rodriguez2020random}, which we state below as a corollary.
\begin{corollary}\label{cor:ICIMI-01}
Suppose $\ell \in [a, b]$ with $a < b$, then
\begin{align}
\gen( \xi, P_{W|Z_{[n]}})  &\leq \frac{b-a}{n}\sum_{i = 1}^n \Eb_{Z^{\pm}_{[n]}}\left[  \sqrt{2 I_{Z^{\pm}_{i}}(W; R_i)} \right]\\
 &\leq \frac{b-a}{n}\sum_{i = 1}^n \sqrt{2 I(W; R_i | Z^{\pm}_{i})}. \label{eqn:ICIMI}
\end{align}
\end{corollary}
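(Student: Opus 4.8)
The plan is to read the corollary off from Theorem \ref{thm:ICIMI} by controlling the sample-conditioned cumulant generating function of $\tilde{G}_i$ under the boundedness hypothesis. First I would pin down the law of the decoupled pair $(\tilde{W}_i, \tilde{R}_i)$ conditioned on $Z^{\pm}_i$: because $R_i$ is a Rademacher variable independent of the whole data table, the requirement $(\tilde{R}_i, Z^{\pm}_i) \overset{D}{=} (R_i, Z^{\pm}_i)$ makes $\tilde{R}_i$ a Rademacher variable independent of $Z^{\pm}_i$, and the Markov chain $\tilde{W}_i \leftrightarrow Z^{\pm}_i \leftrightarrow \tilde{R}_i$ then forces $\tilde{R}_i$ to be independent of $(\tilde{W}_i, Z^{\pm}_i)$ altogether. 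Hence, for every realization $Z^{\pm}_i = (z^-, z^+)$, the variable $\tilde{G}_i = \tilde{R}_i\bigl(\ell(\tilde{W}_i, z^-) - \ell(\tilde{W}_i, z^+)\bigr)$ is conditionally centered, $\Eb[\tilde{G}_i \mid Z^{\pm}_i = z] = 0$, and since $\ell \in [a,b]$ the factor $\ell(\tilde{W}_i, z^-) - \ell(\tilde{W}_i, z^+)$ lies in $[-(b-a), b-a]$, so $\tilde{G}_i$ takes values in an interval of length $2(b-a)$ almost surely.

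Next I would invoke Hoeffding's lemma conditionally on $Z^{\pm}_i = z$: a centered random variable supported on an interval of length $2(b-a)$ has $\psi_{\tilde{G}_i | Z^{\pm}_i}(\lambda, z) \le \tfrac{1}{2}\lambda^2 (b-a)^2$ for all $\lambda \in \Rb$. Substituting this estimate into the definition (\ref{eqn:def-conjate-inverse}) and minimizing $\bigl(\eta + \tfrac{1}{2}\lambda^2 (b-a)^2\bigr)/\lambda$ over $\lambda > 0$ (the optimum is $\lambda = \sqrt{2\eta}/(b-a)$) yields the pointwise bound $\psi^{*-1}_{\tilde{G}_i | Z^{\pm}_i}(\eta, z) \le (b-a)\sqrt{2\eta}$, valid for every $z$ and every $\eta \ge 0$; taking the expectation of $\Psi_{\tilde{G}_i | Z^{\pm}_i}$ in $z$ likewise gives $\bar{\psi}^{*-1}_{\tilde{G}_i | Z^{\pm}_i}(\eta) \le (b-a)\sqrt{2\eta}$.

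Finally I would plug these into the two displays of Theorem \ref{thm:ICIMI}. Choosing $\eta = I_{Z^{\pm}_i}(W; R_i)$ pointwise and then taking expectations over $Z^{\pm}_i$ gives the first claimed inequality with leading constant $b-a$; then applying Jensen's inequality to the concave function $x \mapsto \sqrt{x}$ turns $\Eb\bigl[\sqrt{2 I_{Z^{\pm}_i}(W;R_i)}\bigr]$ into $\sqrt{2\,\Eb[I_{Z^{\pm}_i}(W;R_i)]} = \sqrt{2\, I(W; R_i \mid Z^{\pm}_i)}$, which is the second inequality (it also follows directly from the weaker bound in Theorem \ref{thm:ICIMI} via the estimate on $\bar{\psi}^{*-1}_{\tilde{G}_i | Z^{\pm}_i}$). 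The only point demanding care is the sub-Gaussian bookkeeping: one must use the full interval length $2(b-a)$ of $\ell(\tilde{W}_i, z^-) - \ell(\tilde{W}_i, z^+)$ in Hoeffding's lemma, which is what produces the factor $(b-a)^2$ (and hence the constant $b-a$ after inverting the Fenchel conjugate); everything else is a routine specialization.
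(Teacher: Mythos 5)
Your proposal is correct and follows essentially the same route as the paper: both arguments observe that conditionally on $Z^{\pm}_i$ the decoupled quantity lies in $[-(b-a),\,b-a]$, deduce the sub-Gaussian bound $\Psi_{\tilde{G}_i|Z^{\pm}_i}(\lambda)\le\frac{(b-a)^2}{2}\lambda^2$ (the paper asserts it as ``straightforward to verify,'' you justify it via Hoeffding's lemma), invert the Fenchel conjugate to get $(b-a)\sqrt{2\eta}$, and plug into Theorem \ref{thm:ICIMI}. Your Jensen step for the second inequality is likewise consistent with (indeed implied by) the weaker bound already present in that theorem.
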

\begin{proof}[Proof of Corollary \ref{cor:ICIMI-01}]
When $\ell \in [a, b]$ and $\tilde{F}_i \in [a-b, b-a]$, it is straightforward to verify that $\tilde{F}_i$ is $\frac{(b-a)^2}{2}$-sub-Gaussian. The definition of the sub-Gaussian distribution in fact gives $\Psi_{\tilde{F}_i|Z^{\pm}_i}(\lambda) \leq \frac{(b-a)^2}{2} \lambda^2$, and
thus $\Psi^{*-1}_{\tilde{F}_i|Z^{\pm}_i}(\eta) \leq (b-a)\sqrt{2 \eta}$, from which the corollary follows.
\end{proof}

\subsection{Dichotomy and generalizations of existing bounds}\label{sec:dichotomy}
The CD lemma allows us to view the existing MI, IMI, CMI, and CIMI bounds in a unified framework. By applying the CD lemma in different manners, these bounds can be obtained almost directly. The technical conditions under which the bound hold can also be generalized, and the bounds themselves can be strengthened using the inverse Fenchel conjugate. These results are summarized in Table \ref{tbl:gen-loss}. We also provide the bounds for bounded loss function, which eliminate the $\bar{\psi}^{*-1}$ functions.

\begin{table*}[tbh]
\centering
\caption{A dichotomy of several generalization bounds using the CD Lemma} \label{tbl:gen-loss}
\resizebox{0.98\textwidth}{!}{
 \begin{tabular}{||c | c | c | c | c | c | c ||} 
 \hline
 Approach &  $X$ & $Y$ & $U$ & $F$ & Generalization bound & Special case $\ell \in [0,1]$\\
 \hline\hline
 MI \cite{xu2017information} & $W$ & $Z_{[n]}$ &   & $\frac{1}{n}\sum_{i = 1}^n \ell(W, Z_{i})$ & $\bar\psi^{*-1}_{-\tilde{F}} \left(I\left(W; Z_{[n]} \right) \right)$ & $\sqrt{\frac{1}{2n}I(W; Z_{[n]})}$\\
 \hline
 IMI \cite{bu2020tightening} & $W$ & $Z_{i}$ &  & $F_i =  \ell(\tilde{W}, Z_{R_i, i})$  & $\frac{1}{n}\sum_{i = 1}^n \bar\psi^{*-1}_{-\tilde{F}_i}\left(I\left(W; Z_{i} \right) \right)$ & $\frac{1}{n}\sum_{i = 1}^n\sqrt{\frac{1}{2}I(W; Z_{i})}$ \\
 \hline
 CMI \cite{steinke2020reasoning} & $W$ & $R_{[n]}$ & $Z_{[n]}^{\pm}$ & $ \frac{1}{n} \sum_{i = 1}^n R_i\left( \ell(W, Z_{i}^{-}) - \ell(W, Z_{i}^{+}) \right)$  & $\bar\psi^{*-1}_{\tilde{F} | Z_{[n]}^{\pm}} \left( I\left(W; R_{[n]} | Z_{[n]}^{\pm}  \right) \right)$ & $\sqrt{2 I(W; R_{[n]} | Z^\pm_{[n]})}$\\
 \hline
 CIMI \cite{haghifam2020sharpened} & $W$ & $R_{i}$ & $Z_{[n]}^{\pm}$ & $F_i = R_i\left( \ell(W, Z_{i}^{-}) - \ell(W, Z_{i}^{+}) \right)$ & $\frac{1}{n}\sum_{i = 1}^n \bar\psi^{*-1}_{\tilde{F}_i | Z_{[n]}^{\pm}} \left( I\left(W; R_i | Z_{[n]}^{\pm} \right) \right)$ & $\frac{1}{n}\sum_{i = 1}^n \sqrt{2 I(W; R | Z_{[n]}^{\pm})}$\\
 \hline
 ICIMI (new) & $W$ & $R_{i}$ & $Z_{i}^{\pm}$ & $ F_i = R_i\left( \ell(W, Z_{i}^{-}) - \ell(W, Z_{i}^{+}) \right)$  & $\frac{1}{n}\sum_{i = 1}^n \bar\psi^{*-1}_{\tilde{F}_i | Z_{i}^{\pm}} \left( I\left(W; R_i | Z_{i}^{\pm} \right) \right)$ & $\frac{1}{n}\sum_{i = 1}^n \sqrt{2 I(W; R | Z_{i}^{\pm})}$\\
 \hline
\end{tabular}
}
\end{table*}

The CMI and CIMI results can be further strengthened by utilizing the inverse Fenchel conjugate function together with the sample-conditioned mutual information. More precisely, let $(\tilde{R}_{[n]}, \tilde{W})$ be the decoupled pair of $(R_{[n]}, W)$ conditioned on $Z^\pm_{[n]}$. Further define
\begin{align}
\tilde{E}_i = \tilde{R}_i\left( \ell(\tilde{W}, Z_{i}^{-}) - \ell(\tilde{W}, Z_{i}^{+}) \right),\quad \tilde{E} = \frac{1}{n} \sum_{i =1}^n \tilde{E}_i,
\end{align}
then we have the strengthened CMI and CIMI bounds:
\begin{align}
\gen\left( \xi, P_{W|Z_{[n]}} \right) &\leq \Eb \left[ \Psi^{*-1}_{\tilde{E} | Z^{\pm}_{[n]}} \left(I_{Z^{\pm}_{[n]}}\left(W; R_{[n]} \right) \right)\right], \label{eqn:CMI-general}\\
\gen\left( \xi, P_{W|Z_{[n]}} \right) &\leq \frac{1}{n}\sum_{i = 1}^n \Eb\left[ \Psi^{*-1}_{\tilde{E}_i |Z^{\pm}_{[n]}}(I_{Z^\pm_{[n]}}(W; R_i)) \right]. \label{eqn:CIMI-general}
\end{align}

\subsection{Comparison of the bounds}\label{sec:comparison}
We first consider the special case where the loss function is bounded, i.e., $\ell \in [0, 1]$. For this case, it was shown in \cite{haghifam2020sharpened} that the CIMI bound (\ref{eqn:CIMI}) is tighter than the CMI bound (\ref{eqn:CMI-zz}). We next show that the proposed bound (\ref{eqn:ICIMI}) is tighter than the CIMI bound (\ref{eqn:CIMI}) when $\ell \in [0, 1]$.

\begin{lemma}\label{lem:ind-better}
For any $i = 1,\ldots, n$, we have
\begin{align}
I(W; R_i | Z^{\pm}_i) \leq I(W; R_i | Z^{\pm}_{[n]}).\notag
\end{align}
\end{lemma}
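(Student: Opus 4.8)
The plan is to establish the inequality $I(W; R_i \mid Z^\pm_i) \leq I(W; R_i \mid Z^\pm_{[n]})$ by introducing the remaining data pairs $Z^\pm_{-i} := (Z^\pm_j)_{j \neq i}$ as an auxiliary conditioning variable and exploiting the independence structure of the CMI construction. The key structural facts are: (i) across columns $j \neq i$, the pairs $Z^\pm_j$ and the Rademacher variables $R_j$ are all mutually independent of the pair $(Z^\pm_i, R_i)$; and (ii) $W$ depends on $R_i$ only through the selected training sample, so conditioned on the full table $Z^\pm_{[n]}$ and on $Z^\pm_{-i}$, the variable $R_i$ is still independent of $(R_j)_{j\neq i}$. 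The crucial consequence of (i) is that $R_i \leftrightarrow Z^\pm_i \leftrightarrow Z^\pm_{-i}$ forms a Markov chain (indeed $R_i$ is independent of $(Z^\pm_i, Z^\pm_{-i})$ jointly), i.e.\ conditioning on $Z^\pm_i$ already, learning $Z^\pm_{-i}$ tells us nothing extra about $R_i$.

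First I would write, using the chain rule for conditional mutual information applied to the triple with conditioning on $Z^\pm_i$,
\begin{align}
I(W; R_i \mid Z^\pm_i) + I(Z^\pm_{-i}; R_i \mid W, Z^\pm_i) = I(W, Z^\pm_{-i}; R_i \mid Z^\pm_i) = I(Z^\pm_{-i}; R_i \mid Z^\pm_i) + I(W; R_i \mid Z^\pm_{[n]}). \notag
\end{align}
By the Markov property in (i), $I(Z^\pm_{-i}; R_i \mid Z^\pm_i) = 0$. Hence
\begin{align}
I(W; R_i \mid Z^\pm_i) = I(W; R_i \mid Z^\pm_{[n]}) + I(Z^\pm_{-i}; R_i \mid W, Z^\pm_i) \geq I(W; R_i \mid Z^\pm_{[n]}), \notag
\end{align}
since mutual information is nonnegative. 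Wait — this yields the \emph{reverse} inequality, so the decomposition must be arranged the other way: condition instead on $Z^\pm_{[n]}$ and bring in $W$. Concretely, I would instead expand $I(W, Z^\pm_{-i}; R_i \mid Z^\pm_i)$ by peeling off $Z^\pm_{-i}$ \emph{last}: $I(W,Z^\pm_{-i};R_i\mid Z^\pm_i)=I(W;R_i\mid Z^\pm_i)+I(Z^\pm_{-i};R_i\mid W,Z^\pm_i)$, and peeling it off \emph{first}: $=I(Z^\pm_{-i};R_i\mid Z^\pm_i)+I(W;R_i\mid Z^\pm_{[n]})=I(W;R_i\mid Z^\pm_{[n]})$. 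Equating gives $I(W;R_i\mid Z^\pm_i)\le I(W;R_i\mid Z^\pm_{[n]})$ precisely when $I(Z^\pm_{-i};R_i\mid W,Z^\pm_i)\le 0$, which forces it to be $0$; but that term need not vanish in general, so the naive chain-rule bookkeeping alone does not close the argument and a genuine use of the selection structure is needed.

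The main obstacle, therefore, is to show that conditioning on $W$ (in addition to $Z^\pm_i$) does not create dependence between $R_i$ and the other columns $Z^\pm_{-i}$ — i.e.\ that $I(Z^\pm_{-i}; R_i \mid W, Z^\pm_i) = 0$, equivalently $R_i \leftrightarrow (W, Z^\pm_i) \leftrightarrow Z^\pm_{-i}$. This is the ``explaining away'' subtlety: $W$ is a common child of both $R_i$ (via $Z^{R_i}_i$) and $Z^\pm_{-i}$ (via $Z^{R_j}_j$), so conditioning on $W$ can in principle couple its parents. The clean way to handle this is to note that the training vector is $(Z^{R_1}_1,\dots,Z^{R_n}_n)$ and that $W$ is a (possibly stochastic) function of it alone, so that $W \leftrightarrow (Z^{R_i}_i, Z^\pm_{-i}, R_{-i}) \leftrightarrow R_i$, and then use that $Z^{R_i}_i$ together with $Z^\pm_i$ and $R_i$ carry only the information that $Z^\pm_i$ contains about the non-$i$ columns (namely none). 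I would make this rigorous by computing the relevant conditional densities directly from the definition of the construction, or alternatively by appealing to the graphical-model structure: in the Bayesian network with nodes $\{Z^\pm_j\}_j, \{R_j\}_j, W$, the set $\{W, Z^\pm_i\}$ together with $R_{-i}$ $d$-separates $R_i$ from $Z^\pm_{-i}$, and then averaging out $R_{-i}$ (which is independent of everything relevant) preserves the conditional independence. Once $I(Z^\pm_{-i}; R_i \mid W, Z^\pm_i) = 0$ is in hand, the chain-rule identity above immediately gives the claimed bound; this is exactly the quantitative version of the intuitive statement in the paper that revealing only $Z^\pm_i$ (rather than all of $Z^\pm_{[n]}$) makes it harder to deduce $R_i$ from $W$.
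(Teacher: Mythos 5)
Your first chain-rule identity is correct and, rearranged properly, already proves the lemma: writing $Z^{\pm}_{-i}$ for the columns of the table other than the $i$th, it reads $I(W;R_i\mid Z^{\pm}_i)+I(Z^{\pm}_{-i};R_i\mid W,Z^{\pm}_i)=I(W;R_i\mid Z^{\pm}_{[n]})$ (after using $I(Z^{\pm}_{-i};R_i\mid Z^{\pm}_i)=0$), hence $I(W;R_i\mid Z^{\pm}_i)=I(W;R_i\mid Z^{\pm}_{[n]})-I(Z^{\pm}_{-i};R_i\mid W,Z^{\pm}_i)\le I(W;R_i\mid Z^{\pm}_{[n]})$ by nonnegativity of conditional mutual information alone. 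This is essentially the paper's own proof, which uses the independence of $R_i$ and $Z^{\pm}_{[n]}$ to write both quantities as $H(R_i)-H(R_i\mid W,\cdot)$ and identifies their difference as $I(R_i;Z^{\pm}_{[n]}\mid W,Z^{\pm}_i)\ge 0$. However, in your ``Hence'' line you moved the term $I(Z^{\pm}_{-i};R_i\mid W,Z^{\pm}_i)$ to the wrong side of the equation (a sign error), obtained the reverse inequality, and then concluded that the proof requires $I(Z^{\pm}_{-i};R_i\mid W,Z^{\pm}_i)=0$.

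That conditional independence is false in general, and the argument you sketch for it cannot be repaired. In the Bayesian network of the construction, $W$ is a common child of $R_i$ (through the selected sample $Z^{R_i}_i$) and of the other columns; conditioning on the collider $W$ opens this path, so $\{W,Z^{\pm}_i\}$ (even together with $R_{-i}$) does \emph{not} $d$-separate $R_i$ from $Z^{\pm}_{-i}$ --- this is exactly the explaining-away effect you flag and then dismiss. The paper's Gaussian-mean example makes the failure quantitative: there $I(W;R_i\mid Z^{\pm}_{[n]})=1/\log_2 e$ almost surely while $I(W;R_i\mid Z^{\pm}_i)=\Theta(1/n)$, so the gap term $I(Z^{\pm}_{-i};R_i\mid W,Z^{\pm}_i)$ is of constant order rather than zero; indeed the strict positivity of this term is precisely why the ICIMI bound can be much tighter than CIMI. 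So the route you ultimately commit to does not close, although the fix is immediate: only the nonnegativity of $I(Z^{\pm}_{-i};R_i\mid W,Z^{\pm}_i)$ is needed, and your own first display already supplies it.
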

\begin{proof}[Proof of Lemma \ref{lem:ind-better}]
By the independence of $R_i$ and $Z^{\pm}_{[n]}$, we have
\begin{align}
I(W; R_i | Z^{\pm}_{[n]}) &= H(R_i) - H(R_i | W, Z^{\pm}_{[n]}), \notag \\
I(W; R_i | Z^{\pm}_i) &= H(R_i) - H(R_i | W, Z^{\pm}_i). \notag
\end{align}
It follows that
\begin{align}
&I(W; R_i | Z^{\pm}_{[n]}) - I(W; R_i | Z^{\pm}_{i})=I(R_i; Z^{\pm}_{[n]} | W, Z^{\pm}_i) \geq 0,\notag
\end{align}
which concludes the proof.
\end{proof}

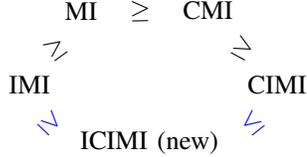
\begin{figure}[t!]
\centering
\begin{tikzpicture}[
roundnode/.style={circle, draw=green!60, fill=green!5, very thick, minimum size=7mm},
squarednode/.style={rectangle, draw=red!60, fill=red!5, very thick, minimum size=5mm},
]
\node[]      (ICIMI)                              {ICIMI (new)};
\node[] (IMI) [above left = 0.25cm and 0.15cm of ICIMI] {IMI};
\node[] (CIMI) [above right = 0.25cm and 0.15cm of ICIMI] {CIMI};
\node[] (MI) [above right =  0.5cm and -0.05cm of IMI] {MI};
\node[] (CMI) [above left = 0.5cm and -0.05cm of CIMI] {CMI};
\draw[-, draw=none] (MI.east) -- (CMI.west) node[midway, sloped] (TextNode) {$\geq$};
\draw[-, draw=none] (MI.south) -- (IMI.north) node[midway, sloped] (TextNode) {$\geq$};
\draw[-, draw=none] (IMI.south) -- (ICIMI.west) node[midway, sloped] (TextNode) {\textcolor{blue}{$\geq$}};
\draw[-, draw=none] (CMI.south) -- (CIMI.north) node[midway, sloped] (TextNode) {$\geq$};
\draw[-, draw=none] (CIMI.south) -- (ICIMI.east) node[midway, sloped] (TextNode) {\textcolor{blue}{$\leq$}};
\end{tikzpicture}
\caption{Relations among generalization bounds, when the inverse Fenchel conjugate functions are assumed to be the same. \label{fig:cmp}}
\end{figure}

To further understand the relation among these bounds under more general conditions when the loss function may not be bounded, let us assume the inverse Fenchel conjugate functions, which roughly capture the geometry induced by the expected loss, are the same (denoted as $\bar{\psi}^{*-1}$) for all the five approaches, i.e., 
\begin{align}
\bar{\psi}^{*-1}= \bar\psi^{*-1}_{-\tilde{F}} =  \bar\psi^{*-1}_{-\tilde{F}_i} = \bar\psi^{*-1}_{\tilde{F} | Z_{[n]}^{\pm}} = \bar\psi^{*-1}_{\tilde{F}_i | Z_{[n]}^{\pm}}  =  \bar\psi^{*-1}_{\tilde{F}_i | Z_{i}^{\pm}}. \notag
\end{align}
Then we can focus on the information measure quantities, and compare these bounds as shown in Fig. \ref{fig:cmp}. Here the inequalities given in black were proved previously (see \cite{bu2020tightening} and \cite{haghifam2020sharpened}). Since the common function $\bar{\psi}^{*-1}$ is non-decreasing, the inequality "CIMI \textcolor{blue}{$\geq$} ICIMI" follows from Lemma \ref{lem:ind-better}. The inequality "IMI \textcolor{blue}{$\geq$} ICIMI" is implied by the following lemma for the same reason. 
\begin{lemma}\label{lem:CMI-ICIMI}
For any $i = 1,\ldots, n$, we have
\begin{align}
I(W; R_i | Z^{\pm}_i) \leq I(W; Z_i).\notag
\end{align}
\end{lemma}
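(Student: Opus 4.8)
The plan is to route the comparison through the joint information $I(W;R_i,Z_i^{\pm})$, squeezing it from both sides. Write $Z_i := Z_i^{R_i}$ for the $i$-th sample actually handed to the learner, and let $Z_i' := Z_i^{-R_i}$ denote the discarded sample in column $i$. The first step is to notice that the triple $(R_i,Z_i,Z_i')$ and the pair $(R_i,Z_i^{\pm})=(R_i,Z_i^-,Z_i^+)$ are measurable functions of one another (given $R_i$, passing between $(Z_i^-,Z_i^+)$ and $(Z_i,Z_i')$ is just a swap), so they carry the same information about $W$: $I(W;R_i,Z_i^{\pm}) = I(W;R_i,Z_i,Z_i')$.

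Next I would expand this joint information by the chain rule in the order that peels off $Z_i$ first, $I(W;R_i,Z_i,Z_i') = I(W;Z_i) + I(W;R_i,Z_i'\mid Z_i)$, and show the second term is zero because $W$ is conditionally independent of $(R_i,Z_i')$ given $Z_i$. This is the intuitive heart of the lemma: the learner's output depends on column $i$ only through the sample it receives, so the selection bit $R_i$ and the unseen sample $Z_i'$ reveal nothing more about $W$ once $Z_i$ is known. To make it precise one goes to the underlying probability space: with $W = g\bigl(Z_i,\,(Z_j^{R_j})_{j\ne i},\,N\bigr)$ where $N$ is the learner's internal randomness, the block $\bigl((Z_j^{R_j})_{j\ne i},N\bigr)$ is built from primitives disjoint from $R_i$ and $Z_i^{\pm}$, hence independent of $(R_i,Z_i,Z_i')$; conditioning on $Z_i$ then leaves $W$ a function of quantities independent of $(R_i,Z_i')$, giving $W \perp (R_i,Z_i')\mid Z_i$ and therefore $I(W;R_i,Z_i^{\pm}) = I(W;Z_i)$.

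Finally, expanding the same joint information the other way, $I(W;R_i,Z_i^{\pm}) = I(W;Z_i^{\pm}) + I(W;R_i\mid Z_i^{\pm})$, and using $I(W;Z_i^{\pm})\ge 0$, we get $I(W;R_i\mid Z_i^{\pm}) \le I(W;R_i,Z_i^{\pm}) = I(W;Z_i)$, which is the claim; in fact this yields the sharper identity $I(W;R_i\mid Z_i^{\pm}) = I(W;Z_i) - I(W;Z_i^{\pm})$.

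The only genuinely delicate step is the conditional independence $W \perp (R_i,Z_i')\mid Z_i$ — everything else is bookkeeping with the chain rule and the relabeling bijection. Establishing it rigorously requires invoking the structural assumptions explicitly: the $Z_{[n]}^{\pm}$ are i.i.d., the Rademacher variables $R_{[n]}$ are independent of them, and the kernel $P_{W\mid Z_{[n]}}$ is applied to the selected training vector alone, so that after conditioning on $Z_i$ the variable $W$ is driven only by data independent of $R_i$ and $Z_i'$.
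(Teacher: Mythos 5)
Your proof is correct and follows essentially the same route as the paper's: identify $Z_i$ with $Z_i^{R_i}$, use the conditional independence of $W$ and $(R_i, Z_i^{-R_i})$ given $Z_i^{R_i}$ to get $I(W;R_i,Z_i^{\pm}) = I(W;Z_i)$, and then drop $I(W;Z_i^{\pm})\ge 0$ via the chain rule. The extra justification of the conditional independence and the sharper identity $I(W;R_i\mid Z_i^{\pm}) = I(W;Z_i) - I(W;Z_i^{\pm})$ are fine additions but do not change the argument.
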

\begin{proof}[Proof of Lemma \ref{lem:CMI-ICIMI}]
First $Z_i$ and $Z_i^{R_i}$ are both the $i^{th}$ training sample for the input of the algorithm, thus
\begin{align}
& I(W; Z_i) = I(W; Z_{i}^{R_i}).
\end{align}
Then since $Z_i^{-R_i}, R_i$ and $W$ are independent given $Z_i^{R_i}$, 
\begin{align}
& I(W; Z_{i}^{\pm}, R_i) = I(W; Z_i^{R_i}, Z_i^{-R_i}, R_i)\\
& = I(W; Z_i^{R_i}) + I(W; Z_{i}^{-R_i}, R_i | Z_i^{R_i}) = I(W; Z_i^{R_i}).
\end{align}
It follows that
\begin{align}
I(W; Z_i) = I(W; Z_i^{\pm}, R_i)  \geq I(W; R_i | Z_i^{\pm}),
\end{align}
which concludes the proof.
\end{proof}

The inverse Fenchel conjugate functions may indeed be different for different bounds, thus although the above comparison suggests certain dominant relations, it is not clear for any specific problem, whether any particular bound is tighter than the other. This is particularly true if we use the bounds based on the inverse Fenchel conjugate, however, even for the special case of $\ell \in [0, 1]$, the different multiplicative factors and the sum-square-root forms imply that the relation can be less clear.

\subsection{Revisiting the example}\label{sec:rev-example}
We now return to the problem of estimating the Gaussian mean, and show that the proposed ICIMI bound can provide scaling behavior similar to that of IMI, thus order-wise stronger than the CMI and CIMI bounds. In fact, the bound is also strictly better than the IMI bound given in \cite{bu2020tightening} asymptotically in this setting. 

We first formally establish, as suspected previously, that the CMI and CIMI bounds are at least of constant order for this setting, the proof of which can be found in the appendix.

\begin{proposition}\label{prop:CMI-CIMI-Gaussian} 
The strengthened CMI and CIMI bounds, i.e., (\ref{eqn:CMI-general}) and (\ref{eqn:CIMI-general}), are at least $\frac{ \sigma^2}{\pi \sqrt{\log e}}$ in the problem of estimating the Gaussian mean.
\end{proposition}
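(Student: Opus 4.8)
The plan is to bound the two quantities in (\ref{eqn:CMI-general}) and (\ref{eqn:CIMI-general}) from below by producing \emph{lower} bounds on the inverse Fenchel conjugates $\psi^{*-1}_{\tilde E\mid Z^{\pm}_{[n]}}$ and $\psi^{*-1}_{\tilde E_i\mid Z^{\pm}_{[n]}}$ appearing in them. First I would fix the ingredients specific to this example. As already observed in Section \ref{sec:motivating-example}, $I_{Z^{\pm}_{[n]}}(W;R_{[n]})=n\ln 2$ and $I_{Z^{\pm}_{[n]}}(W;R_i)=\ln 2$ almost surely; the clean reason is that, a.s., the $2^n$ candidate averages $\tfrac1n\sum_j z_j^{r_j}$ are pairwise distinct (no nontrivial $\{-1,0,1\}$‑combination of the i.i.d.\ continuous differences $z_i^+-z_i^-$ vanishes), so $W$ determines $R_{[n]}$, hence each $R_i$, given $Z^{\pm}_{[n]}$. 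Next I would record the decoupled objects: conditioned on $Z^{\pm}_{[n]}=z$, $\tilde R_{[n]}$ is uniform on $\{-1,1\}^n$, $\tilde W$ has the law of $\tfrac1n\sum_j z_j^{R'_j}$ for a fresh uniform $R'_{[n]}$, and the two are conditionally independent; hence $\tilde E_i=\tilde R_i\,g_i(\tilde W)$ and $\tilde E=\tfrac1n\sum_i\tilde R_i\,g_i(\tilde W)$ with $g_i(w):=(w-z_i^-)^2-(w-z_i^+)^2=(z_i^+-z_i^-)(2w-z_i^+-z_i^-)$, both of conditional mean zero (so the relevant $\psi$'s are plain log‑MGFs).

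For the strengthened CIMI bound (\ref{eqn:CIMI-general}) I would use a $\cosh$–Jensen reduction. Given $z$ and $\tilde W=w$, $\tilde E_i$ is a symmetric two‑point variable $\pm|g_i(w)|$, so averaging over $\tilde R_i$ and then applying Jensen to the convex even map $\cosh$ gives $\psi_{\tilde E_i\mid Z^{\pm}_{[n]}}(\lambda,z)=\ln\Eb[\cosh(\lambda g_i(\tilde W))\mid z]\geq\ln\cosh(\lambda\,\bar c_i(z))$, where $\bar c_i(z):=\Eb[|g_i(\tilde W)|\mid z]$. Since $\ln\cosh\lambda\geq\lambda-\ln 2$ for $\lambda\geq0$, a one‑line infimum computation yields $\psi^{*-1}_{c\epsilon}(\eta)\geq c$ whenever $\eta\geq\ln 2$; applied at $\eta=I_{Z^{\pm}_{[n]}}(W;R_i)=\ln 2$ this gives $\psi^{*-1}_{\tilde E_i\mid Z^{\pm}_{[n]}}(\ln 2,z)\geq\bar c_i(z)$. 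Taking expectations and using exchangeability in $i$, (\ref{eqn:CIMI-general}) is at least $\Eb[\,|g_1(\tilde W)|\,]=\Eb[\,|Z_1^+-Z_1^-|\cdot|2\tilde W-Z_1^+-Z_1^-|\,]$. To finish I would peel off column $1$: $2\tilde W-(Z_1^++Z_1^-)=C+\tfrac1n R'_1(Z_1^+-Z_1^-)$ with $C\sim N(0,2\sigma^2(1-1/n^2))$ independent of $Z_1^+-Z_1^-$ and of $R'_1$; averaging over $R'_1$ and using $\tfrac12|x+y|+\tfrac12|x-y|\geq|x|$ removes the $\tfrac1n R'_1(\cdot)$ term at no cost, leaving $\Eb|Z_1^+-Z_1^-|\cdot\Eb|C|=\tfrac{4\sigma^2}{\pi}\sqrt{1-1/n^2}$, which exceeds $\tfrac{\sigma^2}{\pi\sqrt{\log e}}$ for every $n\geq2$.

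For the strengthened CMI bound (\ref{eqn:CMI-general}) I would instead prove and use an elementary ``reverse Chernoff'' inequality: from $e^{\lambda F}\geq e^{\lambda a}\mathbf 1\{F\geq a\}$ one gets $\psi_{F\mid U}(\lambda,u)\geq\lambda a+\ln\Pb(F\geq a\mid U=u)$ for all $\lambda\geq0$, hence $\psi^{*-1}_{F\mid U}(\eta,u)\geq a$ whenever $\eta\geq\ln(1/\Pb(F\geq a\mid U=u))$. Here $\eta=I_{Z^{\pm}_{[n]}}(W;R_{[n]})=\ln(2^n)$, so it suffices to show $\Pb(\tilde E\geq a\mid Z^{\pm}_{[n]})\geq 2^{-n}$ for any fixed $a$ slightly below $\tfrac{4\sigma^2}{\pi}$. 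Conditioned on $\tilde W=w$, one has $\tilde E=\tfrac2n\sum_{i\in\mathcal A}|g_i(w)|-\tfrac1n\sum_i|g_i(w)|$ with $\mathcal A$ a uniformly random subset of $[n]$; taking $\mathcal A$ to be the $\lceil\rho_w n\rceil$ indices with the largest $|g_i(w)|$, where $\rho_w=\tfrac12+\tfrac a{2M_w}$ and $M_w:=\tfrac1n\sum_i|g_i(w)|$, is an event of probability $2^{-\lceil\rho_w n\rceil}$ that forces $\tilde E\geq a$ as soon as $a\leq M_w$. Since $M_w\to\tfrac{4\sigma^2}{\pi}$ for $w$ near $\mu$ (law of large numbers) and $\Pb(\tilde W$ near $\mu)\to1$, for large $n$ this yields $\Pb(\tilde E\geq a\mid Z^{\pm}_{[n]})\geq 2^{-\rho^{*}n-O(1)}$ with some $\rho^{*}<1$, so $\ln(1/\Pb)\leq n\ln 2=\eta$, and therefore $\psi^{*-1}_{\tilde E\mid Z^{\pm}_{[n]}}(n\ln2,Z^{\pm}_{[n]})\geq a$ a.s. Taking expectations and letting $a\uparrow\tfrac{4\sigma^2}{\pi}$ shows (\ref{eqn:CMI-general}) is eventually at least $\tfrac{4\sigma^2}{\pi}$, and the finitely many small $n$ are checked by the same argument with explicit constants.

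I expect the main obstacle to be precisely this lower‑bounding of $\psi^{*-1}$: it is defined as an \emph{infimum} over $\lambda>0$, hence naturally gives upper bounds, and a convenient choice of $\lambda$ is of no help. The two devices above are what break this — the $\cosh$–Jensen step replaces $\tilde E_i$ by its ``extremal'' scaled‑Rademacher surrogate $\bar c_i(z)\,\epsilon$, whose $\psi^{*-1}$ at $\eta=\ln 2$ equals $\bar c_i(z)$ exactly; the reverse‑Chernoff inequality converts the problem into producing a good lower bound on the lower tail $\Pb(\tilde E\geq a\mid Z^{\pm}_{[n]})$, for which the elementary ``keep the largest $|g_i(w)|$'' count suffices once the (bounded) support of $\tilde W$ and the concentration of $M_w=\tfrac1n\sum_i|g_i(w)|$ around $\tfrac{4\sigma^2}{\pi}$ are under control. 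The remaining work is routine Gaussian moment bookkeeping, and the constant $\tfrac{\sigma^2}{\pi\sqrt{\log e}}$ in the statement is a deliberately conservative consequence of tracking these estimates.
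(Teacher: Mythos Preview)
Your CIMI argument is essentially the paper's: both start from $\Psi_{\tilde E_i\mid Z^{\pm}_{[n]}}(\lambda)=\ln\Eb[\cosh(\lambda g_i(\tilde W))\mid Z^{\pm}_{[n]}]$, apply Jensen to the convex function $\cosh$, and then use an elementary lower bound on $\ln\cosh$. The only cosmetic difference is that the paper feeds $g_i(\tilde W)$ into Jensen (obtaining $|\Delta_i|=|\Eb[g_i(\tilde W)\mid Z^{\pm}_{[n]}]|$) while you feed $|g_i(\tilde W)|$ (obtaining the slightly larger $\bar c_i=\Eb[|g_i(\tilde W)|\mid Z^{\pm}_{[n]}]$), and the paper uses the quadratic lower bound of Lemma~\ref{lem:lncosh-lower} where you use the simpler $\ln\cosh x\ge |x|-\ln 2$. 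Either route gives a constant well above $\tfrac{\sigma^2}{\pi\sqrt{\log e}}$. One omission: your triangle‑inequality peeling of column~$1$ only covers $n\ge 2$; you should record the trivial $n=1$ case separately (there $\bar c_1=\Eb[(Z_1^+-Z_1^-)^2]=2\sigma^2$).

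Your CMI argument, however, takes a genuinely different route and has a gap. The paper does \emph{not} use any tail estimate; it simply applies Jensen to $\exp(\cdot)$ in the inner expectation over $\tilde W$ (conditioning first on $\tilde R_{[n]}$), which replaces each $g_i(\tilde W)$ by its conditional mean $\Delta_i$ and yields $\Psi_{\tilde E\mid Z^{\pm}_{[n]}}(\lambda)\ge\sum_{i=1}^n\ln\cosh(\tfrac{\lambda}{n}\Delta_i)$ directly. From there the computation is identical to the CIMI case and is uniform in $n\ge 2$ (with $n=1$ handled separately). Your reverse‑Chernoff device is correct as a lemma, but the conclusion ``$\psi^{*-1}_{\tilde E\mid Z^{\pm}_{[n]}}(n\ln 2,Z^{\pm}_{[n]})\ge a$ a.s.'' does not follow: the criterion $\Pb(\tilde E\ge a\mid Z^{\pm}_{[n]})\ge 2^{-n}$ depends on $Z^{\pm}_{[n]}$ through the random quantities $M_w$, and the law‑of‑large‑numbers step you invoke only guarantees it with high probability for large $n$, not almost surely, and not at all for small $n$. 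The sentence ``the finitely many small $n$ are checked by the same argument with explicit constants'' is not a proof --- for any fixed small $n$ your inequality $2^{-\rho^{*}n-O(1)}\ge 2^{-n}$ can fail on a set of $z$ of positive probability, so no explicit constant rescues the a.s.\ claim. Replacing this part by the paper's Jensen‑on‑$\exp$ step is both simpler and uniform in $n$.
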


The next proposition establishes a generalization error bound based on the ICIMI bound in this setting. 
\begin{proposition}\label{prop:ICIMI-Gaussian}
For the the problem of estimating the mean of the Gaussian distribution, the ICIMI bound gives
\begin{align}
\gen\left( \xi, P_{W | Z_{[n]}}\right) \leq \frac{2 \sigma^2}{\sqrt{\pi}} \sqrt{\frac{1}{n-1}} + o\left( \frac{1}{\sqrt{n}}\right). \label{eqn:our-Gaussian}
\end{align}
\end{proposition}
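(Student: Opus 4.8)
The plan is to instantiate the stronger form of the ICIMI bound in Theorem~\ref{thm:ICIMI}, namely $\gen(\xi,P_{W|Z_{[n]}})\le\frac1n\sum_{i=1}^n\Eb\big[\Psi^{*-1}_{\tilde G_i|Z^\pm_i}(I_{Z^\pm_i}(W;R_i))\big]$, and to evaluate a single summand: since the columns $(Z_i^-,Z_i^+,R_i)$ are i.i.d.\ and $W=\frac1n\sum_i Z_i^{R_i}$ is symmetric in them, all $n$ summands are equal and it suffices to treat $i=1$. The structural fact driving the computation is that, conditioned on $Z_1^\pm$, the estimator $W=\frac1n\big(Z_1^{R_1}+\sum_{j\ge2}Z_j^{R_j}\big)$ is an equal mixture of the two Gaussians $N(m_{\pm},v)$ with $m_{\pm}=\frac1n(Z_1^{\pm}+(n-1)\mu)$ and $v=\frac{(n-1)\sigma^2}{n^2}$, the component being selected by $R_1$; the decoupled $\tilde W_1$ has the same conditional mixture law but with its selector made independent of $\tilde R_1$.

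Two estimates are then needed. First, for the information term: $W\mid R_1,Z_1^\pm$ is Gaussian with fixed variance $v$ while $W\mid Z_1^\pm$ has variance $v+\tfrac14(m_+-m_-)^2$, so the maximum-entropy inequality together with $\log(1+x)\le x$ gives the exact bound $I_{Z_1^\pm}(W;R_1)\le\frac{(Z_1^+-Z_1^-)^2}{8(n-1)\sigma^2}$. Second, for the conjugate term: expanding the squared loss, $\tilde G_1=\tilde R_1\,A\,(2\tilde W_1-C)$ with $A:=Z_1^+-Z_1^-$ and $C:=Z_1^++Z_1^-$; writing $\tilde W_1$ through its hidden Rademacher selector $\epsilon$ (choosing between $Z_1^+$ and $Z_1^-$) and the independent Gaussian partial sum $\sum_{j\ge2}Z_j^{R_j}$, and integrating out $\epsilon$, $\tilde R_1$ and the Gaussian successively, I expect to obtain
\begin{align}
\psi_{\tilde G_1\mid Z_1^\pm}(\lambda)=\tfrac{\lambda^2A^2\tau^2}{2}+\ln\cosh(\lambda p)+\ln\cosh(\lambda q),\notag
\end{align}
with $\tau^2:=\frac{4(n-1)\sigma^2}{n^2}$, $p:=A\cdot\frac{n-1}{n}(2\mu-C)$ and $q:=A^2/n$. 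Since $\Eb[\tilde G_1\mid Z_1^\pm]=0$, applying $\ln\cosh t\le t^2/2$ shows $\tilde G_1\mid Z_1^\pm$ is $(A^2\tau^2+p^2+q^2)$-sub-Gaussian, whence $\Psi^{*-1}_{\tilde G_1\mid Z_1^\pm}(\eta)\le\sqrt{2(A^2\tau^2+p^2+q^2)\,\eta}$.

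Combining the two estimates, using $\sqrt{x+y+z}\le\sqrt x+\sqrt y+\sqrt z$ and $|p|=|A|\cdot\frac{n-1}{n}|2\mu-C|$, the $i=1$ summand is bounded by
\begin{align}
\frac{1}{2\sqrt{n-1}\,\sigma}\Big(\tau\,\Eb[A^2]+\tfrac{n-1}{n}\Eb\big[A^2|2\mu-C|\big]+\tfrac1n\Eb[|A|^3]\Big).\notag
\end{align}
The \emph{crucial point} is that $A=Z_1^+-Z_1^-$ and $2\mu-C=(\mu-Z_1^+)+(\mu-Z_1^-)$ are jointly Gaussian with zero covariance, hence independent, so $\Eb[A^2|2\mu-C|]=\Eb[A^2]\,\Eb[|2\mu-C|]=2\sigma^2\cdot\frac{2\sigma}{\sqrt\pi}$. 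With $\Eb[A^2]=2\sigma^2$, $\tau=\frac{2\sqrt{n-1}\,\sigma}{n}$ and $\Eb[|A|^3]=O(\sigma^3)$, this expression equals $\frac{2\sigma^2}{\sqrt\pi}\frac{1}{\sqrt{n-1}}+O(1/n)$, and since all $n$ summands coincide and $O(1/n)=o(1/\sqrt n)$, (\ref{eqn:our-Gaussian}) follows.

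The main obstacle is the explicit evaluation of $\psi_{\tilde G_1\mid Z_1^\pm}$: one has to disentangle the two layers of randomness inside $\tilde W_1$ — the $\pm$-selection between $Z_1^-$ and $Z_1^+$, and the Gaussian partial sum over the remaining columns — and then carry out the Rademacher/Gaussian moment-generating-function algebra carefully (in particular checking $\Eb[\tilde G_1\mid Z_1^\pm]=0$ so that the inverse Fenchel conjugate applies). A secondary but indispensable subtlety is spotting the independence of $Z_1^+-Z_1^-$ and $2\mu-Z_1^+-Z_1^-$: it is exactly this factorization of $\Eb[A^2|2\mu-C|]$ that produces the leading constant $2/\sqrt\pi$, whereas the weaker $\bar\psi^{*-1}$ form of Theorem~\ref{thm:ICIMI} would, by Jensen's inequality, only recover the constant $\sqrt2$ already achieved by the IMI bound.
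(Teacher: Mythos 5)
Your proposal is correct, and it follows the same overall skeleton as the paper's proof: apply the sample-conditioned form of Theorem~\ref{thm:ICIMI}, bound $I_{Z_i^\pm}(W;R_i)$ by essentially $(Z_i^+-Z_i^-)^2/(8(n-1)\sigma^2)$, bound the conditional inverse Fenchel conjugate by a quantity whose leading part is $|(Z_i^+)^2-(Z_i^-)^2|\sqrt{2\eta}$, and finish by exploiting the independence of $Z_i^+-Z_i^-$ and $Z_i^++Z_i^-$ (zero covariance of jointly Gaussian variables) to produce the constant $2/\sqrt{\pi}$. Where you genuinely differ is in how the two key estimates are derived. For the information term, the paper invokes Lemma~\ref{lem:mixed-Gaussian} (differential-entropy asymptotics of a Gaussian mixture), which yields an asymptotic equality with a sample-dependent $o(1/n)$ remainder, whereas your maximum-entropy argument combined with $\ln(1+x)\le x$ gives the clean nonasymptotic upper bound $I_{Z_i^\pm}(W;R_i)\le (Z_i^+-Z_i^-)^2/(8(n-1)\sigma^2)$; this is simpler and sidesteps the issue of carrying a sample-dependent $o(1/n)$ term through the outer expectation. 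For the conjugate term, the paper's Lemma~\ref{lem:finer} upper-bounds the conditional MGF and then optimizes $\lambda$ with a three-way case analysis on $|z_+|$ versus $|z_-|$, while you compute the conditional cumulant generating function exactly, $\psi_{\tilde G_i|Z_i^\pm}(\lambda)=\tfrac{\lambda^2A^2\tau^2}{2}+\ln\cosh(\lambda p)+\ln\cosh(\lambda q)$ with $A=Z_i^+-Z_i^-$, $p=\tfrac{n-1}{n}A(2\mu-C)$, $q=A^2/n$, $\tau^2=\tfrac{4(n-1)\sigma^2}{n^2}$ (I verified this identity and that $\Eb[\tilde G_i\mid Z_i^\pm]=0$, the step you flagged as the main obstacle), and then apply $\ln\cosh t\le t^2/2$ to obtain the single sub-Gaussian bound $\Psi^{*-1}_{\tilde G_i|Z_i^\pm}(\eta)\le\sqrt{2(A^2\tau^2+p^2+q^2)\eta}$ valid for every realization, with no case split; its leading term $|p|\sqrt{2\eta}$ coincides (for $\mu=0$) with the paper's $|(Z_i^+)^2-(Z_i^-)^2|\sqrt{2\eta}$. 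The final bookkeeping ($\Eb[A^2]=2\sigma^2$, $\Eb[|2\mu-C|]=2\sigma/\sqrt{\pi}$, remaining terms $O(1/n)=o(1/\sqrt{n})$, identical summands by exchangeability) is correct, so \eqref{eqn:our-Gaussian} follows; on balance your route buys a fully nonasymptotic and case-free treatment of both per-sample estimates, at the cost of a slightly heavier explicit MGF computation, while the paper's route modularizes the argument into two reusable lemmas.
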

\noindent\textit{Remark: }This bound scales as $\Theta(\sqrt{\frac{1}{n}})$. Compared to the IMI bound in (\ref{eqn:bu-Gaussian}), the new ICIMI based bound is asymptotically tighter by a factor of $\sqrt{\frac{\pi}{2}} \approx 1.25$.

Proposition \ref{prop:ICIMI-Gaussian} is proved by studying separately the sample-conditioned individual mutual information $I_{Z_i^{\pm}}(W; R_i)$ and the inverse Fenchel conjugate functions $\Psi^{*-1}_{\tilde{G}_i|Z^{\pm}}$. For the former, since the algorithm here is averaging the samples without any prior of the Gaussian distribution, without loss of generality, we can assume the mean of the Gaussian distribution to be $0$, i.e., $\mu = 0$. Therefore, given $Z^{\pm}_i = z_{\pm} \in \Rb^2$, $W$ is mixed-Gaussian distributed, which follows $N(\frac{z_{+}}{n}, \frac{n-1}{n^2}\sigma^2)$ when $R_i = 1$ and follows $N(\frac{z_{-}}{n}, \frac{n-1}{n^2}\sigma^2)$ when $R_i = -1$. The term $I_{Z_i^{\pm}}(W; R_i)$ is thus related to the scaling behavior of the differential entropy of a mixed Gaussian distribution, which the following lemma makes more precise. 

\begin{lemma}\label{lem:mixed-Gaussian}
Let $R$ be a Rademacher random variable and $V$ be a mixed-Gaussian random variable, such that $V \sim N(\nu, \sigma^2)$ when $R=1$, and $V \sim N(-\nu, \sigma^2)$ when $R=-1$.
We have
\begin{align}
	I(V; R) = \frac{1}{2} \frac{\nu^2}{\sigma^2} + o\left( \frac{\nu^2}{\sigma^2} \right).
\end{align}
\end{lemma}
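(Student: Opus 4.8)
The plan is to reduce the statement to a small-parameter Taylor expansion of the differential entropy of the mixture density. First, since $I(V;R)$ is invariant under the rescaling $V \mapsto V/\sigma$, which replaces $\nu$ by $\nu/\sigma$, I would assume without loss of generality that $\sigma = 1$ and prove $I(V;R) = \frac{1}{2}\nu^2 + o(\nu^2)$ as $\nu \to 0$. Writing $I(V;R) = h(V) - h(V\mid R)$ and observing that, conditioned on $R$, $V$ is Gaussian with unit variance so that $h(V\mid R) = \frac{1}{2}\log(2\pi e)$, the claim becomes $h(V) = \frac{1}{2}\log(2\pi e) + \frac{1}{2}\nu^2 + o(\nu^2)$.

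Let $\phi$ denote the standard normal density, let $p_\nu(v) = \frac{1}{2}\phi(v-\nu) + \frac{1}{2}\phi(v+\nu)$, and set $g(\nu) := h(V) = -\int p_\nu(v)\log p_\nu(v)\,dv$. I would show that $g$ is an even, twice continuously differentiable function of $\nu$ near $0$ by differentiating under the integral sign; this is legitimate because the two-sided bound $\frac{1}{2}\phi(v-\nu) \le p_\nu(v) \le \frac{1}{\sqrt{2\pi}}$ forces $|\log p_\nu(v)|$ to grow at most quadratically in $v$, uniformly for $\nu$ in a bounded set, so that the Gaussian decay of $\partial_\nu p_\nu$ and $\partial_\nu^2 p_\nu$ dominates all the integrands (and, by the same token, makes the second-order Taylor remainder genuinely $o(\nu^2)$ rather than merely $O(\nu^2)$). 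Evenness gives $g'(0)=0$ and $g(0) = \frac{1}{2}\log(2\pi e)$, so everything reduces to computing $g''(0)$.

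For the second derivative, using $\int \partial_\nu p_\nu\,dv = 0$ one obtains $g'(\nu) = -\int (\partial_\nu p_\nu)\log p_\nu\,dv$ and hence $g''(\nu) = -\int (\partial_\nu^2 p_\nu)\log p_\nu\,dv - \int (\partial_\nu p_\nu)^2/p_\nu\,dv$. At $\nu = 0$ we have $\partial_\nu p_0 \equiv 0$, so the last integral vanishes, while $\partial_\nu^2 p_0(v) = \phi''(v) = (v^2-1)\phi(v)$ and $\log p_0(v) = \log\phi(v) = -\frac{1}{2}\log(2\pi) - \frac{v^2}{2}$. Since $\int \phi''\,dv = 0$ and $\int v^2\phi''(v)\,dv = \E[V^4] - \E[V^2] = 2$ for $V \sim N(0,1)$, this yields $g''(0) = 1$. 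Therefore $g(\nu) = \frac{1}{2}\log(2\pi e) + \frac{1}{2}\nu^2 + o(\nu^2)$, and undoing the rescaling gives $I(V;R) = \frac{1}{2}\nu^2/\sigma^2 + o(\nu^2/\sigma^2)$, as claimed.

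I expect the main obstacle to be analytic bookkeeping rather than anything conceptual: one must carefully verify the uniform integrability that justifies differentiating under the integral sign and guarantees the $o(\nu^2)$ form of the remainder, i.e.\ control $\log p_\nu$ and its $\nu$-sensitivity in the tails where $p_\nu$ is small. If one prefers to avoid this, an alternative is to invoke the known low-SNR expansion of the mutual information of the scalar additive Gaussian channel, $I\!\left(X;\sqrt{s}\,X + N\right) = \frac{s}{2}\Var(X) + o(s)$ as $s \to 0$ for finite-variance inputs, applied to the bounded input $R\nu$ (which has variance $\nu^2$); but the direct entropy computation above is short and self-contained.
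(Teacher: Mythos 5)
Your proposal is correct, but it follows a genuinely different route from the paper. The paper does not Taylor-expand the mixture entropy directly: it invokes the closed-form representation of the differential entropy of a symmetric two-component Gaussian mixture from Michalowicz et al., which reduces the lemma to the small-$\alpha$ behavior (with $\alpha = |\nu|/\sigma$) of the explicit integral $I(\alpha)=\frac{2}{\sqrt{2\pi}}e^{-\alpha^2/2}\int_0^\infty e^{-t^2/2}\cosh(\alpha t)\ln(\cosh(\alpha t))\,dt$, and then sandwiches $I(\alpha)/\alpha^2$ using $1+x^2/2\le\cosh x\le e^{x^2/2}$ together with a limit--integral exchange (justified by monotonicity of $(1+x)\ln(1+x)/x$) to get $I(\alpha)=\tfrac12\alpha^2+o(\alpha^2)$, hence $I(V;R)=\alpha^2-I(\alpha)=\tfrac12\alpha^2+o(\alpha^2)$. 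You instead compute the same expansion from first principles: normalize $\sigma=1$, use $h(V\mid R)=\tfrac12\log(2\pi e)$, exploit evenness of $g(\nu)=h(V)$ to kill the linear term, and evaluate $g''(0)=-\int\phi''\log\phi=1$ via $\int v^2\phi''\,dv=2$, which is a correct and self-contained calculation yielding the same constant. What your route buys is independence from the cited entropy formula and a transparent identification of where the $\tfrac12\nu^2/\sigma^2$ comes from; what it costs is exactly the analytic bookkeeping you flag -- uniform Gaussian domination of $\partial_\nu p_\nu$, $\partial_\nu^2 p_\nu$, $(\partial_\nu p_\nu)^2/p_\nu$ and of $\log p_\nu$ (quadratic growth via $p_\nu\ge\tfrac12\phi(v\mp\nu)$) to justify differentiating under the integral and the $o(\nu^2)$ remainder -- whereas the paper's route confines the analysis to a one-dimensional explicit integral but still needs its own limit-exchange argument. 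Your alternative via the low-SNR expansion $I(X;\sqrt{s}X+N)=\tfrac{s}{2}\Var(X)+o(s)$ (I-MMSE) also gives the right answer for the input $R$, though it would rest on an external result rather than a self-contained proof.
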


The next lemma gives an upper bound on the inverse Fenchel conjugate functions.
\begin{lemma}\label{lem:finer}
For the problem of estimating the mean of the Gaussian distribution, and any realization of $Z^{\pm}_{i} = z_{\pm} \in \Rb^2$ with $|z_+| \not= |z_-|$,
\begin{align}
	\Psi^{*-1}_{\tilde{G}_i|Z^{\pm}_i = z_{\pm}}(\eta) \leq B_{z_{\pm}, n}(\eta) = |z_{+}^2 - z_{-}^2 |\sqrt{2\eta} + \Theta\left( \frac{1}{n} \right), \notag
\end{align}
where
\begin{align}
B_{z_{\pm}, n}(\eta)&:= |z_{+}^2 - z_{-}^2 |\sqrt{2\eta} \notag\\ 
&+ \frac{2\sigma^2(z_+ - z_-)^2}{n|z_+^2 - z_-^2|} \sqrt{2\eta} + \frac{4\max\left(z_+^2, z_-^2 \right)}{n};
\end{align}
and for $|z_+| = |z_-|$,
\begin{align}
\Psi^{*-1}_{\tilde{G}_i|Z^{\pm}_i = z_{\pm}}(\eta) \leq 4\sigma\sqrt{\frac{2 \eta}{n}}|z_+| + \frac{4\max\left(z_{+}^2, z_{-}^2 \right)}{n}.
\end{align}
\end{lemma}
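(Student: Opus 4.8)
The plan is to evaluate the sample-conditioned cumulant generating function $\psi_{\tilde{G}_i \mid Z^{\pm}_i}(\lambda, z_{\pm})$ in closed form and then bound its inverse Fenchel conjugate. Fix a realization $Z^{\pm}_i = z_{\pm}$ with $z_{\pm} = (z_-, z_+)$. Under the squared loss $\ell(w,z) = (w-z)^2$, a direct expansion gives $\ell(\tilde{W}_i, z_-) - \ell(\tilde{W}_i, z_+) = 2\tilde{W}_i(z_+ - z_-) + z_-^2 - z_+^2$, so $\tilde{G}_i = \tilde{R}_i\left(2\tilde{W}_i(z_+ - z_-) + z_-^2 - z_+^2\right)$. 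Conditioned on $z_{\pm}$, the decoupled pair has $\tilde{R}_i$ an independent Rademacher variable while $\tilde{W}_i$ is distributed as $W \mid Z^{\pm}_i = z_{\pm}$; since $W = \frac1n Z_i^{R_i} + \frac1n \sum_{j \ne i} Z_j$ and (without loss of generality) $\mu = 0$, we may represent $\tilde{W}_i = \frac1n Z' + G$, where $Z'$ is uniform on $\{z_+, z_-\}$, $G \sim N(0, \frac{(n-1)\sigma^2}{n^2})$, and $\tilde{R}_i, Z', G$ are mutually independent.

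First I would compute the cumulant generating function. Because $\tilde{R}_i$ is zero-mean and independent of the rest, $\Eb[\tilde{G}_i \mid z_{\pm}] = 0$, so no centering term appears. Conditioning on $(\tilde{R}_i, Z')$, integrating out the Gaussian $G$ via its moment generating function, and then averaging over $Z'$ and $\tilde{R}_i$ using the identity $\cosh(a+b) + \cosh(a-b) = 2\cosh a \cosh b$, one arrives at
\[
\psi_{\tilde{G}_i \mid Z^{\pm}_i}(\lambda, z_{\pm}) = \frac{\lambda^2 v}{2} + \ln\cosh(\lambda \bar{m}) + \ln\cosh(\lambda \delta),
\]
where $v = \frac{4(z_+ - z_-)^2(n-1)\sigma^2}{n^2}$ is the variance contributed by $G$, $\bar{m} = -(z_+^2 - z_-^2)\frac{n-1}{n}$ is the common ``center'' of the two mixture means, and $\delta = \frac{(z_+ - z_-)^2}{n}$ is their ``half-spread.''

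Next, for $|z_+| \ne |z_-|$ (so $\bar{m} \ne 0$; write $D := |z_+^2 - z_-^2| > 0$), I would invoke two elementary bounds, $\ln\cosh(x) \leq \frac{x^2}{2}$ applied to the $\bar{m}$-term and $\ln\cosh(x) \leq |x|$ applied to the $\delta$-term, to get $\psi_{\tilde{G}_i \mid Z^{\pm}_i}(\lambda, z_{\pm}) \leq \frac{\lambda^2(v + \bar{m}^2)}{2} + \lambda\delta$. Substituting into the definition of $\Psi^{*-1}$ and performing the now-explicit infimum over $\lambda > 0$ yields $\Psi^{*-1}_{\tilde{G}_i \mid Z^{\pm}_i = z_{\pm}}(\eta) \leq \delta + \sqrt{2\eta(v + \bar{m}^2)}$. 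Concavity of the square root then gives $\sqrt{v + \bar{m}^2} \leq |\bar{m}| + \frac{v}{2|\bar{m}|} = D\frac{n-1}{n} + \frac{2\sigma^2(z_+ - z_-)^2}{nD} \leq D + \frac{2\sigma^2(z_+ - z_-)^2}{nD}$, and since $\delta = \frac{(z_+ - z_-)^2}{n} \leq \frac{4\max(z_+^2, z_-^2)}{n}$ by the triangle inequality, collecting the pieces gives exactly $\Psi^{*-1}_{\tilde{G}_i \mid Z^{\pm}_i = z_{\pm}}(\eta) \leq B_{z_{\pm}, n}(\eta)$, with the Taylor form $B_{z_{\pm}, n}(\eta) = D\sqrt{2\eta} + \Theta(\frac1n)$ read off directly. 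For $|z_+| = |z_-|$: if $z_+ = z_-$ then $\tilde{G}_i \equiv 0$ conditionally and the bound is trivial; if $z_+ = -z_- \ne 0$ then $\bar{m} = 0$, $\delta = \frac{4z_+^2}{n}$ and $v = \frac{16 z_+^2(n-1)\sigma^2}{n^2}$, and the same steps (now with the $\bar{m}$-term absent, so $\Psi^{*-1} \leq \delta + \sqrt{2\eta v}$) together with $\frac{\sqrt{n-1}}{n} \leq \frac{1}{\sqrt n}$ give $\Psi^{*-1}_{\tilde{G}_i \mid Z^{\pm}_i = z_{\pm}}(\eta) \leq 4\sigma\sqrt{\frac{2\eta}{n}}|z_+| + \frac{4 z_+^2}{n}$.

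The main obstacle is the closed-form evaluation of the conditional cumulant generating function: one must recognize that the two Gaussian-mixture means $\bar{m} \pm \delta$ collapse under the $\cosh$-sum identity and carefully track the $\frac1n$ versus $\frac{n-1}{n}$ factors that separate the center $\bar{m}$ (order $1$), the half-spread $\delta$ (order $\frac1n$), and the Gaussian variance $v$ (order $\frac1n$). A second, more subtle point is the deliberate asymmetry in how $\ln\cosh$ is bounded: applying the \emph{quadratic} bound to the $\bar{m}$-term (rather than the linear one) is precisely what lets the Gaussian variance be absorbed into $\sqrt{v + \bar{m}^2}$ and contribute only at order $\frac{v}{D} = \Theta(\frac1n)$ instead of $\sqrt{v} = \Theta(\frac{1}{\sqrt n})$ --- which is what makes the ICIMI bound beat the CMI/CIMI scaling in this example --- while applying the \emph{linear} bound to the already-$O(\frac1n)$ quantity $\delta$ keeps it outside the square root as the clean additive remainder that appears in $B_{z_{\pm}, n}$.
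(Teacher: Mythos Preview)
Your argument is correct and reaches exactly the claimed bound. It does, however, proceed along a different route from the paper's proof, so a short comparison is warranted.

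The paper never writes the cumulant generating function in closed form. Instead, it averages over $\tilde R_i$ first, bounds each of $\Eb[e^{\pm 2\lambda(z_+-z_-)\tilde W}]$ by a common quantity (replacing $z_\pm$ by $|z_\pm|$ in the exponent and then by $\max(|z_+|,|z_-|)$), factors the resulting sum into a product involving $\cosh(\lambda(z_-^2-z_+^2))$, and finally applies $\cosh(x)\le e^{x^2/2}$. This yields the upper bound $\psi \le \tfrac{\lambda^2}{2}(z_+^2-z_-^2)^2 + \tfrac{2\sigma^2\lambda^2}{n}(z_+-z_-)^2 + \tfrac{4\lambda}{n}\max(z_+^2,z_-^2)$, after which the choice $\lambda=\sqrt{2\eta}/|z_+^2-z_-^2|$ gives $B_{z_\pm,n}(\eta)$ directly.

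You instead exploit the identity $\cosh(a+b)+\cosh(a-b)=2\cosh a\,\cosh b$ to obtain the exact expression $\psi=\tfrac{\lambda^2 v}{2}+\ln\cosh(\lambda\bar m)+\ln\cosh(\lambda\delta)$, and then deliberately bound the two $\ln\cosh$ terms \emph{asymmetrically}: quadratically for the order-$1$ center $\bar m$ (so its contribution merges with the Gaussian variance $v$ inside the square root) and linearly for the order-$1/n$ half-spread $\delta$ (so it becomes the additive remainder). The concavity step $\sqrt{v+\bar m^2}\le|\bar m|+v/(2|\bar m|)$ then isolates the $\Theta(1/n)$ piece cleanly. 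This is slightly sharper than the paper at intermediate stages (your quadratic coefficient is $(\tfrac{n-1}{n})^2D^2$ rather than $D^2$, and your linear term is $(z_+-z_-)^2/n$ before you relax it to $4\max(z_+^2,z_-^2)/n$), and it makes the mechanism transparent: the closed form shows precisely why the Gaussian contribution $v$ enters only at order $1/n$ once it is paired with $\bar m$, which is exactly the scaling improvement over CMI/CIMI. Both routes land on the same $B_{z_\pm,n}(\eta)$; yours is more informative about where each term originates, while the paper's is more mechanical.
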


The proofs of these two lemmas are relegated to the appendix. With these lemmas, Proposition \ref{prop:ICIMI-Gaussian} can be proved as follows.

\begin{proof}[Proof of Proposition \ref{prop:ICIMI-Gaussian}]
First by Lemma \ref{lem:mixed-Gaussian}, we have
\begin{align}
I_{Z^{\pm}_{i}}(W; R_i) = \frac{(Z^{-}_{i} - Z^{+}_{i})^2}{8 \sigma^2 } \frac{1}{n-1} + o\left( \frac{1}{n} \right).
\end{align}
Then Theorem \ref{thm:ICIMI} and Lemma \ref{lem:finer} imply
\begin{align}
&\gen(\xi, P_{W|Z_{[n]}}) \leq \frac{1}{n}\sum_{i = 1}^n \Eb\left[ \Psi^{*-1}_{\tilde{G}_i|Z^{\pm}_i}\left( I_{Z_{\pm1, i}}(W; R_i)  \right) \right] \\
&\leq \frac{1}{n}\sum_{i = 1}^n \Eb\left[ \frac{(Z_i^- - Z_i^+)^2 |Z_i^- + Z_i^+|}{2 \sigma \sqrt{n-1}} + o\left( \frac{1}{\sqrt{n}} \right) \right]\\
&= \frac{2 \sigma^2}{\sqrt{\pi}} \sqrt{\frac{1}{n-1}} + o\left( \frac{1}{\sqrt{n}}\right),
\end{align}
which proves the proposition.
\end{proof}

\subsection{Proof of the CD lemma} \label{sec:CD}
\begin{proof}[Proof of Lemma \ref{lem:CD}]
The definition of sample-conditioned cumulant generating function implies that
\begin{align}
\Psi_{\tilde{F}| U}(\lambda) = \ln \Eb\left[e^{\lambda \tilde{F}} | U \right] - \Eb[\lambda \tilde{F} | U].
\end{align}
By the Donsker–Varadhan variational representation of KL divergence, for any $\lambda \in \Rb$
\begin{align}
\Eb[\lambda F | U] - \ln \Eb\left[e^{\lambda \tilde{F}} | U \right]&\leq D(P_{X,Y|U} || P_{\tilde{X},\tilde{Y}|U})\\
&=I_U(X ; Y), \label{eqn:var-kl}
\end{align}
where the equality is due to (\ref{eqn:IU}). It follows that for $\lambda > 0$
\begin{align}
\Eb[F |U ] - \Eb[\tilde{F} | U] &\leq \inf_{\lambda > 0} \frac{I_U(X; F) + \Psi_{\tilde{F}|U}(\lambda)}{\lambda} \\
&= \Psi^{*-1}_{\tilde{F}|U}\left( I_{U}(X; Y) \right).
\end{align}
Moreover
\begin{align}
 \Eb[F] - \Eb[\tilde{F}] &\leq \Eb\left[ \Psi^{*-1}_{\tilde{F}|U}\left( I_{U}(X; Y) \right) \right]  \\
& = \Eb\left[ \inf_{\lambda > 0} \frac{I_U(X; F) + \Psi_{\tilde{F}|U}(\lambda)}{\lambda} \right]\\
& \leq \inf_{\lambda > 0} \frac{I(X; F|U) + \Eb\left[\Psi_{\tilde{F}|U}(\lambda)\right] }{\lambda} \\
&= \bar{\psi}^{*-1}_{\tilde{F}|U}\left( I_{U}(X; Y) \right),
\end{align}
where the last inequality is by exchanging the order of expectation and infimum. Similarly, since
\begin{align}
\Psi_{-\tilde{F}| U}(-\lambda) = \ln \Eb\left[e^{\lambda \tilde{F}} | U \right] - \Eb[\lambda \tilde{F} | U],
\end{align}
with $\lambda < 0$, we have
\begin{align}
\Eb[\tilde{F} | U] - \Eb[F | U ] &\leq \inf_{\lambda < 0} \frac{I_U(X; F) + \Psi_{-\tilde{F}|U}(-\lambda)}{-\lambda} \\
&= \Psi^{*-1}_{-\tilde{F}|U}\left( I_{U}(X; Y) \right),
\end{align}
and
\begin{align}
\Eb[\tilde{F}] - \Eb[F] &\leq \Eb\left[ \Psi^{*-1}_{-\tilde{F}|U}\left( I_{U}(X; Y) \right) \right] \\
&\leq \bar{\psi}^{*-1}_{-\tilde{F}|U} \left( I(X; Y|U) \right),
\end{align}
which concludes the proof. 
\end{proof}

\section{Conclusion}\label{sec:conclusion}
We propose a new information theoretic generalization error bound, referred to as the ICIMI bound, based on a combination of the error decomposition technique and the conditional mutual information structure. Due to the reduced information content in the conditioning term, the proposed bound can be significantly tighter than several existing bounds. Particularly, when the loss function is bounded, it can be shown that the proposed bound is always tighter than the CMI and the CIMI bounds. A conditional decoupling lemma is provided which leads to a unified framework to study and compare these bounds, and it may be of independent interest.

\section*{Appendix}

\begin{proof}[Proof of Proposition \ref{prop:CMI-CIMI-Gaussian}]
For the special case $n=1$, i.e., there is only one training sample, the CMI based bound and CIMI based bound, i.e., (\ref{eqn:CMI-general}), (\ref{eqn:CIMI-general}), are equal. It is straightforward to verify that conditioned on $Z_1^{\pm}$, $\tilde{E} = \tilde{E}_1$ and $\tilde{E}$ takes $(Z_1^{-} - Z_1^{+})^2$ with probability $\frac{1}{2}$ and takes $-(Z_1^{-} - Z_1^{+})^2$ with probability $\frac{1}{2}$. Then we have
\begin{align}
\Psi_{\tilde{E} | Z_{[1]}^{\pm}}(\lambda) = \ln \cosh\left((Z_1^{-} - Z_1^{+})^2 \lambda \right).
\end{align}
Their inverse Fenchel conjugate functions are equal and by the lower bound of $\ln \cosh(\cdot)$ function in Lemma \ref{lem:lncosh-lower},
\begin{align}
 & \Psi_{\tilde{E} | Z^\pm_{[1]}}^{*-1}(\eta) =  \inf_{\lambda > 0} \frac{\eta + \Psi_{\tilde{E} | Z_{[1]}^{\pm}}(\lambda)}{\lambda} \\
 & \geq \inf_{\lambda > 0} \frac{\eta + \min\left( \frac{(Z_1^{-} - Z_1^{+})^2 \lambda}{2}, \frac{(Z_1^{-} - Z_1^{+})^4 \lambda^2}{4} \right)}{\lambda} \\
 & \geq \min\left(\frac{1}{2}, \sqrt{\eta} \right) (Z_1^{-} - Z_1^{+})^2.
\end{align}
Since $I_{Z^{\pm}_1}\left( W; R_1 \right) = 1/\log e, a.s.,$, we have
\begin{align}
 \Eb\left[ \Psi_{\tilde{E} | Z^\pm_{[1]}}^{*-1}\left(I_{Z^{\pm}_1}\left( W; R_1 \right) \right) \right] \geq \sigma^2 > \frac{\sigma^2}{\pi \sqrt{\log e}}.
\end{align}

For $n \geq 2$, denote the mean of $Z^\pm_{[n]}$ as $\bar{Z}$, from which we have
\begin{align}
\bar{Z} = \Eb\left[\tilde{W} | Z_{[n]}^{\pm} \right]. \label{eqn:barZ-W}
\end{align}
For each $i = 1,\ldots, n$, let $\Delta_i = \ell(\bar{Z}, Z_i^-) - \ell(\bar{Z}, Z_i^+)$. It follows that
\begin{align}
\Delta_i &= \left( Z_i^- - Z_i^+ \right) \left( Z_i^- + Z_i^+ - 2 \bar{Z}\right) \\
&= \left(1 - \frac{1}{n} \right) \left( \left( Z_i^-\right)^2 - \left( Z_i^+\right)^2 \right) \notag \\
& \quad - \frac{\sum_{j \not= i} (Z_j^- + Z_j^+)}{n} (Z_i^- - Z_i^+).
\end{align}
Thus
\begin{align}
\Eb[|\Delta_i|] &= \Eb\left[ \Eb\left[ |\Delta_i| | Z^{\pm}_i \right] \right] \geq \Eb\left[ {\big | }\Eb\left[ \Delta_i | Z^{\pm}_i \right] {\big |} \right] \notag \\
&= \left(1 - \frac{1}{n} \right) \Eb\left[ {\big |} \left( Z_i^-\right)^2 - \left( Z_i^+\right)^2 {\big |} \right] \\
& \geq \frac{1}{2} \Eb\left[ |Z_i^- - Z_i^+| \right] \Eb\left[ |Z_i^- + Z_i^+| \right] = \frac{2 \sigma^2}{\pi}, \label{eqn:LB-abs-Delta}
\end{align}
where the first inequality is by applying Jensen's inequality with respect to convex function $|\cdot|$; the last inequality is because $n\geq 2$ and $Z_i^- - Z_i^+$ and $Z_i^- + Z_i^+$ are independent. 
In addition, we can write
\begin{align}
& \Eb\left[\ell(\tilde{W}, Z_i^-) - \ell(\tilde{W}, Z_i^+)| Z_{[n]}^{\pm} \right] \notag \\
& =  \Eb\left[\left( Z_i^- - Z_i^+ \right) \left( Z_i^- + Z_i^+ - 2 \tilde{W} \right) | Z_{[n]}^{\pm} \right] \notag \\
& = \left( Z_i^- - Z_i^+ \right) \left( Z_i^- + Z_i^+ - 2 \Eb\left[ \tilde{W} | Z_{[n]}^{\pm} \right] \right) \notag \\
&= \left( Z_i^- - Z_i^+ \right) \left( Z_i^- + Z_i^+ - 2 \bar{Z}\right) = \Delta_i, \label{eqn:Delta-cond}
\end{align}
where the last equality is by the representation of $\bar{Z}$ in (\ref{eqn:barZ-W}).

We can then lower-bound the CMI based bound (\ref{eqn:CMI-general}) for this problem. The sample-conditioned cumulant generating function satisfies the bound shown in (\ref{eqn:long1}-\ref{eqn:long7}).
\begin{figure*}[t]
\begin{align}
\Psi_{\tilde{E} | Z_{[n]}^{\pm}}(\lambda) &= \ln \Eb\left[ \exp\left( \lambda \tilde{E} - \lambda \Eb[\tilde{E}] \right)  {\Big |} Z^\pm_i \right] \label{eqn:long1} \\
&=  \ln \Eb\left[ \exp \left( \frac{\lambda}{n} \sum_{i=1}^n \tilde{R}_i(\ell(\tilde{W}, Z^-_i) - \ell(\tilde{W}, Z^+_i)) \right) {\Big |} Z^\pm_{[n]} \right] \label{eqn:long2}\\
& = \ln \Eb\left[  \Eb \left[ \exp \left( \frac{\lambda}{n} \sum_{i=1}^n \tilde{R}_i (\ell(\tilde{W}, Z^-_i) - \ell(\tilde{W}, Z^+_i)) \right) {\Big |} Z_{[n]}^\pm, \tilde{R}_{[n]} \right] {\Big |} Z^\pm_{[n]} \right] \label{eqn:long3} \\
& \geq \ln \Eb\left[ \exp \left( \frac{\lambda}{n} \Eb \left[ \sum_{i=1}^n \tilde{R}_i (\ell(\tilde{W}, Z^-_i) - \ell(\tilde{W}, Z^+_i)) {\Big |} Z_{[n]}^\pm, \tilde{R}_{[n]} \right] \right) {\Big |} Z^\pm_{[n]} \right] \label{eqn:long4} \\
&=\ln \Eb\left[ \prod_{i = 1}^n \exp \left( \frac{\lambda}{n} \tilde{R}_i \Delta_i \right) {\Big |} Z^\pm_{[n]} \right] \label{eqn:long5} \\
&=  \sum_{i=1}^n \ln \Eb\left[ \exp\left( \frac{\lambda}{n} \tilde{R}_i \Delta_i \right) {\Big |} Z^\pm_{[n]} \right] \label{eqn:long6} \\
&= \sum_{i =1}^n \ln \cosh\left( \frac{\lambda}{n}\Delta_i \right) \geq \sum_{i = 1}^n \min\left(1, \frac{\lambda |\Delta_i|}{2n} \right) \frac{\lambda|\Delta_i|}{2n}.\label{eqn:long7}
\end{align}
\hrulefill
\end{figure*}
The first equality (\ref{eqn:long1}) is the definition of $\Psi_{\tilde{E} | Z_{[n]}^{\pm} }(\lambda)$; the second inequality (\ref{eqn:long2}) is by $\Eb[\tilde{E}] = 0$; the third equality (\ref{eqn:long3}) is by the total expectation; the first inequality (\ref{eqn:long4}) is by Jenson's inequality with respect to convex function $\exp(\cdot)$; the fourth equality (\ref{eqn:long5}) is by (\ref{eqn:Delta-cond}); the fifth equality (\ref{eqn:long6}) is by the independence of $\tilde{R}_{[n]}$ conditioned on $Z_{n}^{\pm}$; and the last inequality is due to Lemma \ref{lem:lncosh-lower}.
Its inverse Fenchel conjugate function can thus be lower bounded as follows.
\begin{align}
 &\Psi_{\tilde{E} | Z^\pm_{[n]}}^{*-1}(\eta) = \inf_{\lambda > 0} \frac{\eta + \Psi_{\tilde{E} | Z_{[n]}^{\pm}}(\lambda)}{\lambda} \\
  & \geq \inf_{\lambda > 0} \sum_{i = 1}^n \frac{\frac{1}{n}\eta + \min\left(1, \frac{\lambda |\Delta_i|}{2n} \right) \frac{\lambda|\Delta_i|}{2n} }{\lambda }\\
 & \geq \sum_{i = 1}^n \inf_{\lambda > 0}  \frac{\frac{1}{n}\eta + \min\left(1, \frac{\lambda |\Delta_i|}{2n} \right) \frac{\lambda|\Delta_i|}{2n} }{\lambda }\\
 &\geq \sum_{i = 1}^n \min\left( \frac{|\Delta_i|}{2n}, \frac{\sqrt{\eta}|\Delta_i|}{n^{3/2}} \right) \label{eqn:lower-bound-gCGF}.
\end{align}
Then since $I_{Z_{n}^\pm}(W; R_{[n]}) = n/\log e, a.s.$ and $\Psi_{\tilde{E} | Z^\pm_{[n]}}^{*-1}$ is non-negative, the CMI based bound satisfies
\begin{align}
& \Eb \left[ \Psi^{*-1}_{\tilde{E} |Z^{\pm}_{[n]}}(I_{Z^\pm_{[n]}}(W; R_{[n]})) \right] \\
& \geq \sum_{i = 1}^n \Eb\left[ \min\left( \frac{|\Delta_i|}{2n}, \frac{|\Delta_i|}{\sqrt{\log e}n} \right) \right]\\
& \geq \sum_{i = 1}^n \Eb\left[ \frac{|\Delta_i|}{2\sqrt{\log e} n} \right] \geq \frac{ \sigma^2}{\pi \sqrt{\log e}},
\end{align}
where the last equality is by (\ref{eqn:LB-abs-Delta}).

Similarly, we can lower-bound the CIMI based bound (\ref{eqn:CIMI-general}). The sample-conditioned cumulant generating function satisfies
\begin{align}
& \Psi_{\tilde{E}_i | Z_{[n]}^{\pm}}(\lambda) \notag \\ 
& =\ln \Eb\left[ \exp \left(\lambda \tilde{R}_i(\ell(\tilde{W}, Z^-_i) - \ell(\tilde{W}, Z^+_i)) \right) {\Big |} Z^\pm_{[n]} \right] \\
& = \ln \cosh(\lambda \Delta_i) \geq \min\left(1, \frac{|\lambda \Delta_i|}{2} \right) \frac{|\lambda \Delta_i|}{2}.
\end{align}
The inverse Fenchel conjugate functions can be lower bounded as 
\begin{align}
\Psi_{\tilde{E}_i | Z^\pm_{[n]}}^{*-1}(\eta) & \geq \inf_{\lambda > 0} \frac{\eta + \min\left(1, \frac{|\lambda \Delta_i|}{2} \right) \frac{|\lambda \Delta_i|}{2} }{\lambda} \\
& = \min\left( \inf_{\lambda > 0} \frac{\eta + \lambda \frac{ |\Delta_i|}{2} }{\lambda}, \inf_{\lambda > 0} \frac{\eta + \lambda^2 \frac{\Delta_i^2}{4} }{\lambda} \right) \\
& = \min\left( \frac{1}{2}, \sqrt{\eta} \right) |\Delta_i|.
\end{align}
Since $\Psi_{\tilde{E}_i | Z^\pm_{[n]}}^{*-1}(\eta)$ is non-negative, and $I_{Z_{n}^\pm}(W; R_i) = 1/\log e, a.s.$, the CIMI based bound satisfies
\begin{align}
&\frac{1}{n}\sum_{i = 1}^n \Eb \left[ \Psi^{*-1}_{\tilde{E}_i |Z^{\pm}_{[n]}}(I_{Z^\pm_{[n]}}(W; R_i)) \right] \notag\\
&\geq \frac{1}{2\sqrt{\log e} n} \sum_{i = 1}^n \Eb[|\Delta_i|] = \frac{ \sigma^2}{\pi \sqrt{\log e}}.
\end{align}

We can now conclude that the CMI and CIMI bounds in this setting are both at least $\frac{\sigma^2}{\pi \sqrt{\log e}}$.
\end{proof}

\begin{proof}[Proof of Lemma \ref{lem:mixed-Gaussian}]
By the representation of the differential entropy of mixed Gaussian distribution in \cite{michalowicz2008calculation}, we can write
\begin{align}
I(V; R) = h(V) - h(V | R) = \alpha^2 - I(\alpha), \label{eqn:mixed-Gaussian}
\end{align}
where $\alpha = \frac{|\nu|}{\sigma}$ and 
\begin{align}
&I(\alpha) =\frac{2}{\sqrt{2\pi}} e^{-\alpha^2/2} \int_{0}^{\infty} e^{-t^2/2} \text{cosh}(\alpha t) \ln(\text{cosh}(\alpha t)) d t. \notag
\end{align}
Since for any $x \in \Rb$, by the Taylor expansion,
\begin{align}
1 + x^2/2 \leq \text{cosh}(x) = \frac{1}{2} (e^{x} + e^{-x}) \leq e^{x^2/2},
\end{align}
it follows that for any $\alpha < 1$,
\begin{align}
& \frac{\frac{2}{\sqrt{2\pi}} \int_{0}^{\infty} e^{-t^2/2} \text{cosh}(\alpha t) \ln(\text{cosh}(\alpha t)) d t}{\alpha^2} \\
& \leq  \frac{\frac{2}{\sqrt{2\pi}} \int_{0}^{\infty} e^{-t^2/2} e^{\alpha^2t^2/2} \frac{\alpha^2 t^2}{2} d t}{\alpha^2} \\
& = \frac{1}{\sqrt{2\pi}} \int_{0}^{\infty} t^2 e^{-t^2(1-\alpha^2)/2} d t = \frac{1}{2 \sqrt{1 - \alpha^2}},
\end{align}
and take the limit of $\alpha^2 \rightarrow 0$ on both side,
\begin{align}
& \lim_{\alpha^2 \rightarrow 0} \frac{\frac{2}{\sqrt{2\pi}} \int_{0}^{\infty} e^{-t^2/2} \text{cosh}(\alpha t) \ln(\text{cosh}(\alpha t)) d t}{\alpha^2} \leq \frac{1}{2}.
\end{align}
In addition,
\begin{align}
& \frac{\frac{2}{\sqrt{2\pi}} \int_{0}^{\infty} e^{-t^2/2} \text{cosh}(\alpha t) \ln(\text{cosh}(\alpha t)) d t}{\alpha^2} \\
&\geq \frac{\frac{2}{\sqrt{2\pi}} \int_{0}^{\infty} e^{-t^2/2} \left(1 + \frac{\alpha^2t^2}{2} \right) \ln\left(1 + \frac{\alpha^2t^2}{2} \right) d t}{\alpha^2}
\end{align}
take the limit of $\alpha^2 \rightarrow 0$ on both side,
\begin{align}
&\lim_{\alpha^2 \rightarrow 0} \frac{\frac{2}{\sqrt{2\pi}} \int_{0}^{\infty} e^{-t^2/2} \text{cosh}(\alpha t) \ln(\text{cosh}(\alpha t)) d t}{\alpha^2} \\
&\geq \lim_{\alpha^2 \rightarrow 0} \frac{\frac{2}{\sqrt{2\pi}} \int_{0}^{\infty} e^{-t^2/2} \left(1 + \frac{\alpha^2t^2}{2} \right) \ln\left(1 + \frac{\alpha^2t^2}{2} \right) d t}{\alpha^2} \\
&= \frac{2}{\sqrt{2\pi}} \int_{0}^{\infty} e^{-t^2/2} \lim_{\alpha^2 \rightarrow 0} \frac{\left(1 + \frac{\alpha^2t^2}{2} \right) \ln\left(1 + \frac{\alpha^2t^2}{2} \right) }{\alpha^2}d t \\
&= \frac{1}{\sqrt{2\pi}} \int_{0}^{\infty} t^2 e^{-t^2/2} d t = \frac{1}{2},
\end{align}
where the first equality is by exchanging the limit and integral because function $\frac{(1+x)\ln(1+x)}{x}$ is monotonically increasing for $x \geq 0$. Thus the Taylor expansion of $I(\alpha)$ is
\begin{align}
 I(\alpha) = \frac{1}{2} \alpha^2 + o(\alpha^2),
\end{align}
plugging which in equation (\ref{eqn:mixed-Gaussian}) completes the proof.
\end{proof}

\begin{proof}[Proof of Lemma \ref{lem:finer}]
Given $Z_{i}^\pm = z_{\pm} \in \Zc^2$, $\tilde{W}_i$ and $W$ are identically distributed. Drop the index $i$ and write $\tilde{W}_i$ as $\tilde{W}$ for simplicity. With probability $1/2$, $\tilde{W} \sim N\left(\frac{z_{+}}{n}, \frac{n-1}{n^2} \sigma^2 \right)$, and with probability $1/2$, $\tilde{W} \sim N\left(\frac{z_{-}}{n}, \frac{n-1}{n^2} \sigma^2 \right)$. For any $\lambda > 0$,
\begin{align}
&\exp\left( \Psi_{\tilde{G}_i|Z^{\pm}_i=z_{\pm}}\left( \lambda \right) \right)\\
=&\Eb\left[\exp\left( \lambda \tilde{R}\left( \ell(\tilde{W}, z_{-}) - \ell(\tilde{W}, z_{+})\right)\right) \right] \\
=&\Eb \left[\exp\left( \lambda \tilde{R}\left( z_{-}^2 - z_{+}^2 + 2(z_+ - z_{-}) \tilde{W} \right)\right) \right] \\
=& \frac{1}{2}\Eb \left[\exp\left( 2\lambda (z_+ - z_{-}) \tilde{W} \right) \right]\exp\left( \lambda(z_{-}^2 - z_{+}^2) \right) \notag \\
+& \frac{1}{2}\Eb \left[\exp\left( 2\lambda (z_{-} - z_+) \tilde{W} \right) \right] \exp\left( -\lambda(z_{-}^2 - z_{+}^2) \right) \\
\leq &{\Bigg (} \frac{1}{2}\exp\left(2\lambda |z_+ - z_-|\frac{|z_-|}{n} + 2\lambda^2(z_+ - z_-)^2 \frac{n-1}{n^2}\sigma^2 \right) \notag \\
& +\frac{1}{2}\exp\left(2\lambda|z_+ - z_-|\frac{|z_+|}{n} + 2\lambda^2(z_+ - z_-)^2 \frac{n-1}{n^2}\sigma^2 \right) {\Bigg )} \notag \\
& \cdot \left( \frac{1}{2}\exp(\lambda(z_{-}^2 - z_{+}^2)) + \frac{1}{2} \exp(\lambda(z_{+}^2 - z_{-}^2)) \right) \\
\leq&\exp\left(2\sigma^2 \lambda^2 (z_{+} - z_{-})^2\left(\frac{1}{n} - \frac{1}{n^2}\right) \right) \notag \\
	&\cdot \exp\left(2\lambda |z_{+} - z_{-}|\frac{\max(|z_{+}|, |z_{-}|)}{n} \right) \notag\\
	&\cdot \left( \frac{1}{2}\exp(\lambda(z_{-}^2 - z_{+}^2)) + \frac{1}{2} \exp(\lambda(z_{+}^2 - z_{-}^2)) \right) \\
\leq & \exp\left(2\sigma^2 \lambda^2 (z_{+} - z_{-})^2 \frac{1}{n}  \right) \notag \\
	&\cdot \exp\left(2\lambda |z_{+} - z_{-}|\frac{\max(|z_{+}|, |z_{-}|)}{n} \right) \notag\\
	&\cdot \exp\left(\frac{\lambda^2}{2}(z_{-}^2 - z_{+}^2)^2\right),
\end{align}
where the last inequality is from $\frac{1}{2} (e^{x} + e^{-x}) \leq e^{x^2/2}$.
We have for any $\eta >0$,
\begin{align}
& \Psi^{*-1}_{\tilde{G}_i|Z^{\pm}_i}(\eta) = \inf_{\lambda > 0}\left\{\frac{1}{\lambda} \left(\eta +\Psi_{\tilde{G}_i|Z^{\pm}_i}\left( \lambda \right) \right) \right\}\\
& \leq \inf_{\lambda > 0} {\Bigg \{} \frac{1}{\lambda} \eta + \frac{\lambda}{2}\left(\left(Z_{i}^{+}\right)^2 - \left(Z_{i}^{-}\right)^2 \right)^2\\
&\quad\quad + \frac{2\sigma^2\lambda}{n}(Z_{i}^{+} - Z_{i}^{-})^2 + \frac{4\max(Z_{i}^{+}, Z_{i}^{-})^2}{n} {\Bigg \}}
\end{align}
It follows that if $|Z_i^+| \not= |Z_i^-|$, take $\lambda = \frac{\sqrt{ 2 \eta}}{|\left(Z_{i}^{+}\right)^2 - \left(Z_{i}^{-}\right)^2|}$
\begin{align}
& \Psi^{*-1}_{\tilde{G}_i|Z^{\pm}_i}(\eta) \leq B_{Z^{\pm}_{i}, n}(\eta),
\end{align}
and if $Z_i^+ = Z_i^-$, take $\lambda \rightarrow +\infty$,
\begin{align}
& \Psi^{*-1}_{\tilde{G}_i|Z^{\pm}_i}(\eta) \leq \frac{4\max\left(\left(Z_{1}^{+}\right)^2, \left(Z_{i}^{-}\right)^2 \right)}{n},
\end{align}
and if $Z_i^+ = - Z_i^- \not=0$, take $\lambda = \frac{1}{2\sigma|Z_i^+|}\sqrt{\frac{n\eta}{2}}$,
\begin{align}
& \Psi^{*-1}_{\tilde{G}_i|Z^{\pm}_i}(\eta) \leq 4\sigma\sqrt{\frac{2 \eta}{n}}|Z_i^+| + \frac{4\left(Z_{1}^{+}\right)^2}{n}.
\end{align}
\end{proof}

\begin{lemma}\label{lem:lncosh-lower}
The function $\ln \cosh(x)$ is lower bounded as
\begin{align}
\ln \cosh(x) \geq \min\left(1, \frac{|x|}{2} \right) \frac{|x|}{2}. \label{eqn:lncosh-lower}
\end{align}
\end{lemma}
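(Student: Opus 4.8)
The plan is to reduce to $x \ge 0$ using the evenness of $\ln\cosh$, and then split according to the $\min$ appearing in (\ref{eqn:lncosh-lower}): the range $0 \le x \le 2$, where the right-hand side equals $x^2/4$, and the range $x \ge 2$, where it equals $x/2$. On each range I only need one elementary lower bound on $\cosh$.

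On $0 \le x \le 2$ I would start from $\cosh x \ge 1 + x^2/2$, which holds because every term of the Taylor series of $\cosh$ is nonnegative, so that $\ln\cosh x \ge \ln(1 + x^2/2)$. Putting $u := x^2/2 \in [0,2]$, it then suffices to show $\ln(1+u) \ge u/2$ for $u \in [0,2]$. This follows from concavity of $u \mapsto \ln(1+u)$: the graph lies above the chord joining $(0,0)$ and $(2,\ln 3)$, so $\ln(1+u) \ge \frac{\ln 3}{2}\, u$ on $[0,2]$, and since $\ln 3 > 1$ the right-hand side is at least $u/2$. Hence $\ln\cosh x \ge x^2/4$, matching the claimed bound on this range.

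On $x \ge 2$ I would instead use $\cosh x = \frac{1}{2}(e^x + e^{-x}) \ge \frac{1}{2} e^x$, which gives $\ln\cosh x \ge x - \ln 2$. Since $x \ge 2 > 2\ln 2$, we have $x - \ln 2 \ge x/2$, which is exactly $\min(1, x/2)\cdot \frac{x}{2} = \frac{x}{2}$ on this range. Combining the two cases, together with the reduction to $x \ge 0$, completes the proof.

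There is no genuine obstacle in this lemma; the only thing requiring a little care is to check that the two numerical inputs, $e < 3$ and $2\ln 2 < 2$, actually make the estimates line up, and that both are strict at the transition point $x = 2$, so the two branches of the bound are mutually consistent there.
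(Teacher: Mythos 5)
Your proof is correct. The case split at $|x|=2$ and the treatment of the range $x\ge 2$ coincide with the paper's argument: both use $\cosh x \ge \tfrac12 e^{x}$ and absorb the factor $\tfrac12$ using $x\ge 2 > 2\ln 2$ (the paper phrases this as $\tfrac12 e^{x} = \tfrac{e^{x/2}}{2}e^{x/2} > e^{x/2}$, which is the same estimate). Where you genuinely differ is the range $0\le x\le 2$: the paper proves $\ln\cosh x \ge x^2/4$ by a calculus argument, observing that the derivative $\tanh x - \tfrac{x}{2}$ of $\ln\cosh x - \tfrac{x^2}{4}$ is increasing then decreasing and vanishes at $0$, and then checking the endpoint values at $x=0$ and $x=2$; you instead use the series bound $\cosh x \ge 1 + \tfrac{x^2}{2}$ together with the chord bound $\ln(1+u) \ge \tfrac{\ln 3}{2}\,u \ge \tfrac{u}{2}$ on $u\in[0,2]$, which follows from concavity and $\ln 3 > 1$. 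Your route avoids any monotonicity analysis of $\tanh$ and rests only on two explicit elementary inequalities, so it is arguably more self-contained and easier to verify; the paper's derivative argument is equally valid and yields the same bound, with neither approach giving anything stronger than the other on this interval.
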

\begin{proof}
When $|x| \geq 2$,
\begin{align}
\frac{1}{2} \left( e^{x} + e^{-x} \right) > \frac{1}{2} e^{|x|} = \frac{e^{|x|/2}}{2} e^{|x|/2} > e^{|x|/2}.
\end{align}
Take $\ln$ on both,
\begin{align}
\ln \cosh(x) &\geq \frac{|x|}{2}, \quad |x| \in [2, \infty).
\end{align}
When $|x| \leq 2$,

It is straightforward to verify by calculating derivatives that the function $\tanh(x) - \frac{x}{2}$ for $x \geq 0$ is increasing then decreasing. Since $\tanh(0) = 0$, $\ln \cosh(x) - \frac{x^2}{4}$, whose derivative is $\tanh(x) - \frac{x}{2}$, for $x \geq 0$ is increasing (then decreasing but is not needed here). Since $\ln \cosh(0) = 0$ and $\ln \cosh(2) - 1 > 0$, by the fact that $\ln \cosh(x) - \frac{x^2}{4}$ is even function, we know
\begin{align}
\ln \cosh(x) \geq \frac{x^2}{4}, \quad |x| \in [0, 2].
\end{align}
Then combine both results. It follows that 
\begin{align}
\ln \cosh(x) \geq \min\left(1, \frac{|x|}{2} \right) \frac{|x|}{2}. 
\end{align}
\end{proof}

\end{document}